\newtheorem{theorem}{Theorem}[section]
\newtheorem{lemma}[theorem]{Lemma}
\newtheorem{conjecture}[theorem]{Conjecture}
\numberwithin{equation}{section}
\def\D{\mathbb D}
\def\pa{\partial}
\def\Z{\mathbb Z}
\def\T{{\rm Tr}}
\begin{document}

\title{An inequality for the zeta function of a planar domain}
\author{Alexandre Jollivet and Vladimir Sharafutdinov}
\thanks{The work was started when the authors stayed at Institut Henri Poincar\'{e} in the scope of the trimester "Inverse Problems", May -- June 2015. The authors are grateful to the institute for the support and hospitality.\\
The first author is partially supported by French grant ANR-13-JS01-0006.\\
The second author was supported by RFBR, Grant 15-01-05929-a.}
\address{Laboratoire de Math\'ematiques Paul Painlev\'e,
CNRS UMR 8524/Universit\'e Lille 1 Sciences et Technologies,
59655 Villeneuve d'Ascq Cedex, France}
\email{alexandre.jollivet@math.univ-lille1.fr}

\address{Sobolev Institute of Mathematics and Novosibirsk State University, Russia}
\email{sharaf@math.nsc.ru}

\keywords{Steklov spectrum; Dirichlet-to-Neumann operator; zeta function; inverse spectral problem}

\subjclass[2000]{Primary 35R30; Secondary 35P99}

\maketitle

\begin{abstract}
We consider the zeta function $\zeta_\Omega$ for the Dirichlet-to-Neumann operator of a simply connected planar domain $\Omega$ bounded by a smooth closed curve.
We prove non-negativeness and growth properties for $\zeta_\Omega(s)-2\big({L(\pa \Omega)\over 2\pi}\big)^s\zeta_R(s)\ (s\leq-1)$, where $L(\pa \Omega)$ is the length of the boundary curve and $\zeta_R$ stands for the classical Riemann zeta function.
Two analogs of these results are also provided.

\end{abstract}

\section{Introduction}

Let $\Omega$ be a simply connected (probably multisheet) planar domain bounded by a $C^\infty$-smooth closed curve $\partial\Omega$. See \cite[Section 4]{JS} for the discussion of simply connected multisheet planar domains. The {\it Dirichlet-to-Neumann operator} of the domain
$$
\Lambda_\Omega:C^\infty(\partial\Omega)\rightarrow C^\infty(\partial\Omega)
$$
is defined by $\Lambda_\Omega f=\frac{\partial u}{\partial\nu}|_{\partial\Omega}$, where $\nu$ is the outward unit normal to $\partial\Omega$ and $u$ is the solution to the Dirichlet problem
$$
\Delta u=0\quad\mbox{\rm in}\quad\Omega,\quad u|_{\partial\Omega}=f.
$$
The Dirichlet-to-Neumann operator is a first order pseudodifferential operator. Moreover, it is a non-negative self-adjoint operator with respect to the $L^2$-product
$$
(u,v)_{L^2(\partial\Omega)}=\int\limits_{\partial\Omega}u\bar v\,ds,
$$
where $ds$ is the Euclidean arc length of the curve $\partial\Omega$. In particular, the operator $\Lambda_\Omega$ has a non-negative discrete eigenvalue spectrum
$$
\mbox{\rm Sp}(\Omega)=\{0=\lambda_0(\Omega)<\lambda_1(\Omega)\leq\lambda_2(\Omega)\leq\dots\},
$$
where each eigenvalue is repeated according to its multiplicity. The spectrum is called the {\it Steklov spectrum} of the domain $\Omega$. In particular, for the unit disc
${\mathbb D}=\{(x,y)\mid x^2+y^2\leq1\}$,
\begin{equation}
\mbox{\rm Sp}({\mathbb D})=\{0=\lambda^0_0<\lambda^0_1\leq\lambda^0_2\leq\dots\}=\{0,1,1,2,2,\dots\}.
                                            \label{1.9}
\end{equation}

Steklov eigenvalues depend on the size of $\Omega$ in the obvious manner: $\lambda_k(c\Omega)=c^{-1}\lambda_k(\Omega)$ for $c>0$. The dependence of Steklov eigenvalues on the shape of $\Omega$ is probably the most interesting problem of the subject. Nevertheless, only a few of results are known in this direction for simply connected smooth domains. The earliest of such results is Weinstock's inequality \cite{We}:
\begin{equation}
\lambda_1(\Omega)\leq\frac{2\pi}{L(\partial\Omega)},
                                            \label{1.10}
\end{equation}
where $L(\partial\Omega)$ is the length of the boundary curve. The equality in (\ref{1.10}) holds if and only if $\Omega$ is a round disc. The first idea for generalizing Weinstock's inequality to higher Steklov eigenvalues is just to write
\begin{equation}
\lambda_k(\Omega)\leq\frac{2\pi}{L(\partial\Omega)}\,\lambda^0_k \quad(k=0,1,\dots),
                                            \label{1.11}
\end{equation}
where $\lambda^0_k$ are defined by (\ref{1.9}). However, inequality (\ref{1.11}) is wrong for $k\geq2$. The right version of such an estimate obtained by  Hersch, Payne and Schiffer \cite{HPS} (see also the recent work of Girouard and Polterovich \cite{GP2}) looks as follows:
\begin{equation}
\lambda_k(\Omega)\leq\frac{2\pi}{L(\partial\Omega)}\,k\quad (k=0,1,\dots).
                                            \label{1.12}
\end{equation}
The estimate is asymptotically sharp. Observe that, roughly speaking, inequality (\ref{1.12}) is twice weaker than (\ref{1.11}) since $\lambda^0_k\sim k/2$ for large $k$. Actually Hersch, Payne and Schiffer obtained the more general result
$$
\lambda_k(\Omega)\lambda_\ell(\Omega)\leq\left\{
\begin{array}{ll}
\Big(\frac{\textstyle \pi}{\textstyle L(\partial\Omega)}\Big)^2(k+\ell)^2,&\ \mbox{if}\ k+\ell\ \mbox{is even},\\
[0.2cm]
\Big(\frac{\textstyle \pi}{\textstyle L(\partial\Omega)}\Big)^2(k+\ell-1)^2,&\ \mbox{if}\ k+\ell\ \mbox{is odd}
\end{array}\right.
$$
that can be considered as a geometric inequality for pairs of Steklov eigenvalues.

Our main result, presented by Theorems \ref{Th1.1} and \ref{Th1.1'} below, can be also considered as a geometric inequality for the whole collection of Steklov eigenvalues.

Steklov eigenvalues are also defined in more general settings: for domains with non-smooth boundaries, for multiply connected domains, for multidimensional domains, and for compact Riemannian manifolds with boundaries. There are much more geometric inequalities obtained in such settings, see \cite[Section 4]{GP} and references there. In the present paper, we do not consider any of this settings because of the following specifics of the asymptotics of the Steklov spectrum which plays the crucial role in our approach.

In the case of a simply connected (probably multisheet) planar domain $\Omega$ bounded by a $C^\infty$-smooth closed curve, the asymptotics of the Steklov spectrum is completely determined by the length $L(\partial\Omega)$ of the boundary curve. More precisely,
\begin{equation}
\lambda_n(\Omega)=\frac{2\pi}{L(\partial\Omega)}\lambda^0_n+O(n^{-N})\quad\mbox{\rm as}\quad n\rightarrow\infty
                                            \label{1.13}
\end{equation}
fore any $N>0$.
To our knowledge, this fact was first proved by Rozenblum \cite{Ro}. Essentially the same proof was independently presented by Edward \cite{E} with the reference to some preprint by Guillemin and Melrose. In the most general setting, the proof is reproduced in \cite[Lemma 2.1]{GPPS}.

Asymptotics (\ref{1.13}) holds also for simply connected compact Riemannian surfaces with smooth boundary with the obvious change: $L(\partial\Omega)$ is the Riemannian length of the boundary. Studying such Riemannian surfaces is equivalent to studying Riemannian metrics on the unit disc ${\mathbb D}$. In this setting, our main result is expressed by Theorems \ref{Th1.3} and \ref{Th1.3'} below. The only circumstance should be taken into account: conformally equivalent Riemannian metrics have coincident Steklov spectra. The circumstance arises due to the conformal invariance of the Laplace -- Beltrami operator in the two-dimensional case.

Asymptotics (\ref{1.13}) can be generalized to the case  of multiply connected smooth planar domains as well as to the case of multiply connected compact Riemannian surfaces with boundary \cite{GPPS}. If the boundary has $m$ components with lengths $L(\partial_1\Omega),\dots,L(\partial_m\Omega)$, then every boundary component $\partial_i\Omega\ (i=1,\dots,m)$ gives the contribution (\ref{1.13}), with $L(\partial\Omega)$ replaced by $L(\partial_i\Omega)$, and the whole asymptotics is just the union of the contributions. Moreover, a solution to the boundary value problem
$$
\Delta u=0\ \ \mbox{in}\ \ \Omega,\quad\quad \frac{\partial u}{\partial\nu}=\lambda_iu\ \ \mbox{on}\ \ \partial_i\Omega\ \ (1\leq i\leq m)
$$
is concentrated near the boundary and solutions coming from different boundary components do not interfere for large frequencies $(\lambda_1,\dots,\lambda_m)$ \cite{HL, GPPS}.
Probably, main results of the present paper can be generalized to the multiply connected case, but this is not done yet.

We emphasize that the $C^\infty$-smoothness of the boundary curve is essential for the validity of (\ref{1.13}). For example, the asymptotics is much more complicated for polygons \cite[Section 3]{GP}.

We return to considering a simply connected (probably multisheet) planar domain $\Omega$ bounded by a $C^\infty$-smooth closed curve.
Asymptotics (\ref{1.13}) allows us to introduce the {\it zeta function of the domain} $\Omega$
$$
\zeta_\Omega(s)=\mbox{\rm Tr}[\Lambda_\Omega^{-s}]=\sum\limits_{n=1}^\infty\big(\lambda_n(\Omega)\big)^{-s}.
$$
The series converges for $s\in{\mathbb C},\ \mbox{\rm Re}(s)>1$. In particular, $\zeta_{\mathbb D}=2\zeta_R$, where $\zeta_R(s)=\sum_{n=1}^\infty n^{-s}$ is the classical Riemann zeta function. Then $\zeta_\Omega$ extends to a meromorphic function on ${\mathbb C}$ with the unique simple pole at $s=1$. Moreover, the difference
$$
\zeta_\Omega(s)-2\left(\frac{L(\partial\Omega)}{2\pi}\right)^s\,\zeta_R(s)
$$
is an entire function \cite{E}.

Our main result is presented by the following two statements.

\begin{theorem} \label{Th1.1}
Let $\Omega$ be a simply connected (probably multisheet) planar domain bounded by a $C^\infty$-smooth closed curve $\partial\Omega$ and let $L(\partial\Omega)$ be the length of the curve $\partial\Omega$. The zeta function of the domain satisfies
$$
\zeta_\Omega(s)\geq2\left(\frac{L(\partial\Omega)}{2\pi}\right)^s\,\zeta_R(s)\quad\mbox{\rm for}\quad s\leq-1,
$$
where $\zeta_R$ is the classical Riemann zeta function.
\end{theorem}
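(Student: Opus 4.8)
The plan is to transport the problem to the unit circle by a conformal change of variables, to use the super‑polynomial accuracy of the Steklov asymptotics $(\ref{1.13})$ to write the difference as a genuinely convergent series, and then to deduce the whole range $s\le-1$ from the single value $s=-1$ by a convexity argument. First, fix a conformal map $\Phi\colon\D\to\Omega$ (for multisheet domains, cf. \cite{JS}); by boundary regularity it extends smoothly to $\overline\D$, and $\psi:=|\Phi'|$ is a positive $C^\infty$ function on $\S^1=\pa\D$ with $\int_{\S^1}\psi\,d\theta=L(\pa\Omega)$. Since in dimension two a conformal map preserves harmonicity, multiplies the normal derivative by $|\Phi'|^{-1}$ and the arc length by $|\Phi'|$, the operator $\Lambda_\Omega$ on $L^2(\pa\Omega,ds)$ is unitarily equivalent to $A_\psi:=\psi^{-1/2}\Lambda_\D\psi^{-1/2}$ on $L^2(\S^1,d\theta)$, where $\Lambda_\D$ has spectrum $\{0,1,1,2,2,\dots\}$. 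By the scaling law $\lambda_k(c\Omega)=c^{-1}\lambda_k(\Omega)$ we may assume $L(\pa\Omega)=2\pi$, so $\psi$ has mean $1$, $2\big(L(\pa\Omega)/2\pi\big)^s\zeta_R(s)=\zeta_\D(s)$, and the claim becomes $\zeta_{A_\psi}(s)\ge\zeta_\D(s)$ for $s\le-1$.

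Next, write $\lambda_n=\lambda_n(A_\psi)$ and $\mu_n=\lambda_n(\Lambda_\D)$, so $\mu_{2k-1}=\mu_{2k}=k$. Applying $(\ref{1.13})$ to $\Omega$ (equivalently to $A_\psi$) gives $\lambda_n-\mu_n=O(n^{-N})$ for every $N$; since $\lambda_n,\mu_n\asymp n$, the mean value theorem shows $\sum_{n\ge1}(\lambda_n^{-s}-\mu_n^{-s})$ converges locally uniformly in $s\in\C$, hence represents the entire function $g(s):=\zeta_{A_\psi}(s)-\zeta_\D(s)$ (entire by \cite{E}) through analytic continuation from $\mathrm{Re}\,s>1$. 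Setting $s=0$ gives $g(0)=\sum_n(1-1)=0$, i.e. $\zeta_\Omega(0)=-1$. For $p:=-s\ge1$, inserting $\lambda_n^{\,p}-\mu_n^{\,p}=p\int_0^\infty t^{p-1}\big(\mathbf 1_{t>\mu_n}-\mathbf 1_{t>\lambda_n}\big)\,dt$ and summing over $n$,
$$
g(-p)=\sum_{n\ge1}\big(\lambda_n^{\,p}-\mu_n^{\,p}\big)=p\int_0^\infty t^{p-1}\big(N_{\Lambda_\D}(t)-N_{A_\psi}(t)\big)\,dt,\qquad N_P(t):=\#\{n\ge1:\lambda_n(P)<t\},
$$
the last integral being conditionally convergent.

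Now put $h(p):=g(-p)$ for $p\ge0$, so $h(0)=0$. The heart of the argument will be the convexity of $h$ on $[0,\infty)$. Granting it, the chord slopes $p\mapsto h(p)/p$ are nondecreasing (because $h(0)=0$), so $h(p)\ge p\,h(1)$ for all $p\ge1$, and the theorem follows from the single inequality $h(1)\ge0$, which is precisely the asserted inequality at $s=-1$. This base case I would settle by a direct computation: through the conformal reduction $\zeta_\Omega(-1)=-c_1$, where $c_1$ is the coefficient of $t$ in the small‑$t$ expansion of $\mathrm{Tr}\,e^{-tA_\psi}$, and $c_1$ is a local functional of $\psi$ built from the symbol $\tfrac{|\xi|}{\psi(x)}-\tfrac{i}{2}\,\mathrm{sgn}(\xi)\,(\psi^{-1})'(x)$ of $A_\psi$; carrying out the symbol calculus and discarding total derivatives reduces $h(1)\ge0$ to a pointwise inequality of the form $\mathrm{const}\cdot\int_{\S^1}\big((\psi^{-1/2})'\big)^2\,d\theta\ge0$ with the constant of the right sign, with equality exactly when $\psi$ is constant, i.e. when $\Omega$ is a disc.

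The serious point is the convexity of $h$ on $[0,\infty)$, equivalently $h''(p)=\sum_{n\ge1}\big((\ln\lambda_n)^2\lambda_n^{\,p}-(\ln\mu_n)^2\mu_n^{\,p}\big)\ge0$ for $p\ge0$. The summands have no fixed sign — already $\lambda_1<\mu_1$ by Weinstock's inequality $(\ref{1.10})$ — so the series is entirely a matter of cancellation, and this is what makes the restriction $s\le-1$ unavoidable. I would analyse it through the counting‑function representation above, controlling the iterated primitives of the spectral‑shift‑type function $N_{\Lambda_\D}-N_{A_\psi}$; here one must use both the precise form $A_\psi=\psi^{-1/2}\Lambda_\D\psi^{-1/2}$ and the fact that $\psi$ comes from a conformal map, and one may alternatively try to bypass convexity by showing $\int_0^\infty t^{p-1}\big(N_{\Lambda_\D}(t)-N_{A_\psi}(t)\big)\,dt\ge0$ directly for every real $p\ge1$. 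I expect this spectral‑shift estimate, rather than the soft reductions or the base case, to be the technical core of the proof.
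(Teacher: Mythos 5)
Your conformal reduction and the series representation $g(-p)=\sum_{n\ge1}(\lambda_n^p-\mu_n^p)$ (justified by the asymptotics (\ref{1.13}) and analytic continuation) are fine, but the two steps that carry all the weight --- convexity of $h(p)=g(-p)$ on $[0,\infty)$ and the base case $h(1)\ge0$ --- are both left unproven, and the route you sketch for the base case cannot work. You propose to evaluate $h(1)$ as a heat-trace coefficient of $A_\psi$ ``by symbol calculus, discarding total derivatives,'' ending with a local integral $\int_{\S}\big((\psi^{-1/2})'\big)^2\,d\theta$. But $\zeta_a(-1)-2\zeta_R(-1)$ is not locally computable from the symbol: in the paper's notation the operators $\Lambda_a+P_0$ and $|D_a|+P_0$ have \emph{identical} full symbols (this is the content of Lemma \ref{L2.2}: $\Lambda_a^s-|D_a|^s$ is smoothing), while $\mathrm{Sp}(|D_a|)=\{0,1,1,2,2,\dots\}$; hence any quantity read off from the local symbol expansion is the same for both operators and would force $h(1)=0$, contradicting the strict positivity for non-discs asserted in Theorem \ref{Th1.1'}. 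The coefficient of $t^{1}$ in the one-dimensional heat expansion of a first-order pseudodifferential operator is exactly one of the nonlocal coefficients, and Edward's formula (\ref{3.4}) for $\zeta_a(-2)$, involving $\sum_n (n^3-n)|\hat a_n|^2$, already shows these invariants are not local integrals of $a$ and its derivatives. As for convexity of $h$: you note yourself that the summands of $h''$ have no fixed sign and that the needed spectral-shift estimate is ``the technical core''; nothing in the proposal supplies it, so the scheme replaces the original nonnegativity claim by an at least equally hard (and possibly false, certainly unsubstantiated) claim about the same function.

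For comparison, the paper's proof needs neither a base case nor convexity in $s$. With $a=|\Phi'|^{-1}$ normalized by (\ref{2.1}), it uses the explicit orthonormal basis $\varphi_n$ of eigenfunctions of $D_a=a^{1/2}Da^{1/2}$, $D_a\varphi_n=n\varphi_n$ (Lemma \ref{L2.1}); the elementary inequality $(\Lambda u,u)_{L^2}\ge|(Du,u)_{L^2}|$ gives $(\Lambda_a\varphi_n,\varphi_n)_{L^2}\ge|n|$, Jensen's inequality (Lemma \ref{L2.3}) gives $(\Lambda_a^s\varphi_n,\varphi_n)_{L^2}\ge(\Lambda_a\varphi_n,\varphi_n)_{L^2}^s\ge|n|^s$ for $s\ge1$, and since $|D_a|^s\varphi_n=|n|^s\varphi_n$ and $\mathrm{Tr}[\Lambda_a^s-|D_a|^s]=\zeta_a(-s)-2\zeta_R(-s)$ may be evaluated in this basis, the difference is nonnegative \emph{term by term}. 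If you want to pursue a monotonicity-in-$s$ statement of the kind your convexity step was meant to deliver, note that the paper obtains exactly such a statement, inequality (\ref{2.24}), by the same basis trick rather than by any counting-function or spectral-shift analysis; that is the ingredient your proposal is missing.
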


\begin{theorem} \label{Th1.1'}
Let $\Omega$ be as in Theorem \ref{Th1.1}. If $\Omega$ is not a round disk, then
$$
{d\over ds}\Big[\Big(\frac{L(\partial\Omega)}{2\pi}\Big)^{s}\zeta_\Omega(-s)-2\zeta_R(-s)\Big]\geq C(\alpha)e^{\alpha s}\quad\mbox{\rm for}\quad s\geq1
$$
and
$$
\Big(\frac{L(\partial\Omega)}{2\pi}\Big)^{s}\zeta_\Omega(s)-2\zeta_R(-s)\geq C_1(\alpha)e^{\alpha s}\quad\mbox{\rm for}\quad s\geq1
$$
for any positive $\alpha$ with some positive constants $C(\alpha),C_1(\alpha)$.
\end{theorem}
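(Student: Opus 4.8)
The plan is to turn the statement into a single Dirichlet–type series and reduce it to a one–sided asymptotic estimate for consecutive Steklov eigenvalues. Put $L=L(\pa\Omega)$ and $\mu_n=\frac{L}{2\pi}\lambda_n(\Omega)$, and let $\lambda^0_n\in\{0,1,1,2,2,\dots\}$ be the Steklov eigenvalues of $\D$; by (\ref{1.13}), $\mu_n=\lambda^0_n+\ep_n$ with $\ep_n=O(n^{-N})$ for every $N$. First I would record the identity, valid for all $s\in\C$,
\[
G(s):=\Big(\tfrac{L}{2\pi}\Big)^{s}\zeta_\Omega(-s)-2\zeta_R(-s)=\sum_{n\ge1}\big(\mu_n^{\,s}-(\lambda^0_n)^{s}\big),
\]
the series converging absolutely and locally uniformly on $\C$, since $\mu_n^{\,s}-(\lambda^0_n)^{s}=O\big(|s|\,(\lambda^0_n)^{\operatorname{Re}s-1}|\ep_n|\big)$ once $n\gg|s|$; both sides are entire and coincide on $\operatorname{Re}s<-1$ by expanding the Dirichlet series of $\zeta_\Omega$ and of $\zeta_R$. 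Term–by–term differentiation gives $G'(s)=\sum_{n\ge1}\big(\mu_n^{\,s}\log\mu_n-(\lambda^0_n)^{s}\log\lambda^0_n\big)$, again entire. By Theorem \ref{Th1.1}, $G(t)\ge0$ for $t\ge1$; and $G\equiv0$ if and only if $\{\mu_n\}=\{\lambda^0_n\}$ (uniqueness of Dirichlet series), i.e. $\lambda_1(\Omega)=\frac{2\pi}{L}$, i.e., by the equality case of Weinstock's inequality (\ref{1.10}), $\Omega$ is a disk. So for $\Omega$ not a disk we must show $G'(s)\ge C(\alpha)e^{\alpha s}$ and $G(s)\ge C_1(\alpha)e^{\alpha s}$ for $s\ge1$ and every $\alpha>0$ (the $\zeta_\Omega(s)$ in the second displayed inequality of the statement should read $\zeta_\Omega(-s)$).

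Next I would group the series by the value of $\lambda^0_n$: $G(t)=\sum_{k\ge1}g_k(t)$ with $g_k(t)=\mu_{2k-1}^{\,t}+\mu_{2k}^{\,t}-2k^{t}$, and likewise $G'(t)=\sum_{k\ge1}D_k(t)$ with $D_k(t)=\mu_{2k-1}^{\,t}\log\mu_{2k-1}+\mu_{2k}^{\,t}\log\mu_{2k}-2k^{t}\log k$. Both $x\mapsto x^{t}$ and, on $[1,\infty)$ for $t\ge1$, $x\mapsto x^{t}\log x$ are convex, so $g_k(t),D_k(t)\ge0$ as soon as $\mu_{2k-1},\mu_{2k}\ge1$ and $E_k:=\mu_{2k-1}+\mu_{2k}-2k\ge0$, with strict inequality if moreover $E_k>0$ or $\mu_{2k-1}\ne\mu_{2k}$. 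The crucial claim I would aim for is that there is $K_0=K_0(\Omega)$ with $\mu_{2k-1}\ge1$ and $E_k\ge0$ for all $k\ge K_0$, while $E_k>0$ or $\mu_{2k-1}\ne\mu_{2k}$ for infinitely many $k$. Granting this, fix $\alpha>0$ — enlarging $\alpha$ we may assume $e^{\alpha}>K_0$ — and choose $k_\ast>e^{\alpha}$ with $g_{k_\ast}(t)>0$ and $D_{k_\ast}(t)>0$ for $t>1$. The terms with $k\ge K_0$, $k\ne k_\ast$, are non–negative, while the finitely many terms with $k<K_0$ are bounded below by $-2K_0^{\,t+1}$ (for $G$) and by $-c\,K_0^{\,t+1}$ (for $G'$), using $\mu_n\le n$ from (\ref{1.12}) and $x^{t}\log x\ge-1/e$. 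Hence $G(t)\ge g_{k_\ast}(t)-2K_0^{\,t+1}$ and $G'(t)\ge D_{k_\ast}(t)-c\,K_0^{\,t+1}$ for $t\ge1$; and by convexity $g_{k_\ast}(t)\ge2\big(k_\ast+\tfrac12E_{k_\ast}\big)^{t}-2k_\ast^{t}$, so (separating the cases $E_{k_\ast}>0$ and $E_{k_\ast}=0$, $\mu_{2k_\ast-1}\ne\mu_{2k_\ast}$) $g_{k_\ast}(t)\ge k_\ast^{t}\ge e^{\alpha t}$ once $t$ passes a threshold depending only on $k_\ast$ and $\Omega$, and similarly $D_{k_\ast}(t)\ge k_\ast^{t}\log k_\ast\ge e^{\alpha t}$. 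Since $e^{\alpha}>K_0$, both right–hand sides exceed $\tfrac12e^{\alpha t}$ for $t$ large, while on the complementary compact $t$–interval $G$ and $G'$ are continuous and strictly positive for non–disks; adjusting the constants gives the two inequalities. (The strict positivity of $G,G'$ on $[1,\infty)$ should come from the same analysis as Theorem \ref{Th1.1}; alternatively the $G'$–estimate follows from the $G$–estimate together with convexity of $G$ on $[1,\infty)$, which is itself a weighted version of Theorem \ref{Th1.1}.)

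The whole weight of the theorem thus rests on the crucial claim, and specifically on the non–negativity of the pair–sum defect $E_k$ at high frequency. That $E_k$ is only \emph{asymptotically} of one sign — so that Theorem \ref{Th1.1} cannot be obtained term–by–term in this grouping — is already visible for the conformal factor $\rho=\mathrm{const}\cdot(1+\ep\cos m\theta)$: second–order perturbation theory gives, for $1\le k<m$, $E_k\sim\ep^{2}k^{2}(m-k)\big/\big(m(2k-m)\big)$, which is negative for $k<m/2$. To prove the claim I would use the conformal representation: for a (locally) conformal $\Phi\colon\D\to\Omega$ with $\rho=|\Phi'|$ on $\pa\D$, the operator $\Lambda_\Omega$ is unitarily equivalent to $\rho^{-1/2}\Lambda_{\D}\rho^{-1/2}$ on $L^{2}(\S^{1})$ with $\tfrac1{2\pi}\int_{\S^{1}}\rho\,d\theta=\tfrac{L}{2\pi}$, so $\mu_{2k-1},\mu_{2k}$ are the two eigenvalues of $b^{-1/2}\Lambda_{\D}b^{-1/2}$ nearest $k$, where $b=\tfrac{2\pi}{L}\rho$ has mean one. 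For large $k$ these two eigenvalues are determined, up to an error smaller than every power of $1/k$, by the $2\times2$ block of $b^{-1/2}\Lambda_{\D}b^{-1/2}$ on $\mathrm{span}\{\cos k\theta,\sin k\theta\}$; its trace is $2k$ plus a correction governed by the high Fourier modes of $\rho$ (exponentially small in $k$ when $\pa\Omega$ is analytic), and the point of the claim is the non–negativity of that correction — in effect a statement about the sign of a tunnelling–type splitting. This is where I expect the real difficulty, and I would try to read it off from the mechanism behind the proof of Theorem \ref{Th1.1}, or from the explicit expansion of $\zeta_\Omega$ in terms of the Fourier coefficients of $\rho$ in \cite{JS}. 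The remaining part of the claim — that $\mu_n\ne\lambda^0_n$ for infinitely many $n$ — holds because otherwise $\Lambda_\Omega$ and $\frac{2\pi}{L}\Lambda_{\D}$ would be isospectral off a finite set, which forces $\Omega$ to be a disk by the inverse spectral rigidity for simply connected planar domains (cf.\ \cite{JS}).
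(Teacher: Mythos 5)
Your reduction has a genuine gap at exactly the point you yourself flag as ``the crucial claim'': the assertion that the pair defects $E_k=\mu_{2k-1}+\mu_{2k}-2k$ are non-negative for all large $k$ (together with strict positivity for infinitely many $k$). Nothing in the paper, and nothing in your argument, establishes a sign for these rapidly decaying eigenvalue corrections; as your own perturbative example shows, term-by-term positivity in this grouping is delicate, and the sign of the $O(k^{-\infty})$ (tunnelling-type) correction to the trace of the $2\times2$ block is precisely the kind of statement the authors avoid having to prove. The same applies to the two auxiliary claims you lean on: strict positivity of $G$ and $G'$ on compact subintervals of $[1,\infty)$ for non-disks does not follow from Theorem \ref{Th1.1} (which only gives $G\geq 0$, and an entire non-negative function can vanish at interior points), and the assertion that ``$\mu_n=\lambda^0_n$ for all but finitely many $n$ forces $\Omega$ to be a disk'' is not a known rigidity result you can cite; Weinstock's equality case only applies if $\lambda_1(\Omega)=2\pi/L$, which you do not get from agreement at high frequencies. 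So the skeleton (entire Dirichlet-type series, convexity, one dominating term beating finitely many bad terms) is fine, but the engine that would make every high-frequency term non-negative and some term strictly positive is missing.

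The paper's proof sidesteps eigenvalue-level comparisons entirely. Working with $\Lambda_a=a^{1/2}\Lambda a^{1/2}$, it evaluates quadratic forms on the explicit orthonormal basis $\varphi_n$ of eigenfunctions of $D_a=a^{1/2}Da^{1/2}$ (Lemma \ref{L2.1}), for which the exact inequality $(\Lambda_a\varphi_n,\varphi_n)_{L^2}\geq|n|$ holds for \emph{every} $n$, as a consequence of $(\Lambda u,u)_{L^2}\geq|(Du,u)_{L^2}|$; the trace identity of Lemma \ref{L2.2} (Seeley's complex powers: $\Lambda_a^s-|D_a|^s$ is smoothing, with trace $\zeta_a(-s)-2\zeta_R(-s)$) converts the sum of these form differences into the function you call $G$, Jensen's inequality (Lemma \ref{L2.3}) propagates the bound to powers, and Lemmas \ref{L2.5}--\ref{L2.6} characterize the equality cases, producing arbitrarily large $n_0$ with $(\Lambda_a\varphi_{n_0},\varphi_{n_0})_{L^2}>n_0$ when $a$ is not conformally trivial; this single strict term then yields $\psi'(t)\geq Ce^{\alpha t}$ with $\alpha=\ln n_0$ arbitrarily large, and the bound on $\psi$ itself follows by integration. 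In short, the positivity you need pairwise for the actual Steklov eigenvalues is obtained in the paper only for quadratic forms in a non-eigen basis, where it is exact and elementary; to salvage your route you would have to prove the sign of the high-frequency pair defects, which is an open-looking problem rather than a step you can borrow.
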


Each of Theorems \ref{Th1.1} and \ref{Th1.1'} has two other equivalent forms that are also of some interest.

Let ${\mathbb S}=\partial{\mathbb D}=\{e^{i\theta}\}\subset{\mathbb C}$ be the unit circle. To simplify further formulas, the Dirichlet-to-Neumann operator of the unit disc will be denoted by $\Lambda:C^\infty({\mathbb S})\rightarrow C^\infty({\mathbb S})$, i.e., $\Lambda=\Lambda_{\mathbb D}$ (this operator was denoted by $\Lambda_e$ in \cite{JS} and \cite{MS}). Given a positive function $a\in C^\infty({\mathbb S})$, the first order pseudodifferential operator
$$
\Lambda_a=a^{1/2}\Lambda a^{1/2}=a^{-1/2}(a\Lambda)a^{1/2}:C^\infty({\mathbb S})\rightarrow C^\infty({\mathbb S})
$$
is non-negative and self-adjoint with respect to the $L^2$-product $(\cdot,\cdot)_{L^2({\mathbb S})}$. Here $a^{1/2}$ stands for the operator of multiplication by the function $a^{1/2}$. The eigenvalue spectrum of $\Lambda_a$
$$
\mbox{\rm Sp}(a)=\{0=\lambda_0(a)<\lambda_1(a)\leq\lambda_2(a)\leq\dots\}
$$
is called the {\it Steklov spectrum of the function} $a$ (or of the operator $\Lambda_a$). The spectrum $\mbox{\rm Sp}(a)$ has the same asymptotics
\begin{equation}
\lambda_n(a)=\frac{2\pi}{L(a)}\lambda^0_n+O(n^{-N})\quad\mbox{\rm as}\quad n\rightarrow\infty\quad\mbox{\rm for any}\quad N>0,
                                            \label{1.1}
\end{equation}
where $L(a)=\int_{\mathbb S}a^{-1}(\theta)\,d\theta$.
The {\it zeta function of} $a$ is defined by
\begin{equation}
\zeta_a(s)=\mbox{\rm Tr}[\Lambda_a^{-s}]=\sum\limits_{n=1}^\infty\big(\lambda_n(a)\big)^{-s}\quad \mbox{\rm for}\quad\mbox{\rm Re}\,s>1.
                                            \label{1.0}
\end{equation}
Two kinds of the Steklov spectrum are related as follows. Given a smooth simply connected planar domain $\Omega$, choose a biholomorphism $\Phi:{\mathbb D}\rightarrow\Omega$ and define the function $0<a\in C^\infty({\mathbb S})$ by $a(z)=|\Phi'(z)|^{-1}\ (z\in{\mathbb S})$. Let $\phi:{\mathbb S}\rightarrow\partial\Omega$ be the restriction of $\Phi$ to ${\mathbb S}$. Then $a\Lambda=\phi^*\Lambda_\Omega\,\phi^{*-1}$ and $\mbox{\rm Sp}(a)=\mbox{\rm Sp}(\Omega)$. See \cite[Section 3]{JS} for details. In particular, $\zeta_a=\zeta_\Omega$ is a meromorphic function on ${\mathbb C}$. Theorems \ref{Th1.1} and \ref{Th1.1'} are equivalent to the following statements.

\begin{theorem} \label{Th1.2}
Given a positive function $a\in C^\infty({\mathbb S})$, the inequality
\begin{equation}
\zeta_a(s)\geq2\left(\frac{L(a)}{2\pi}\right)^s\,\zeta_R(s)
                                                    \label{th1}
\end{equation}
holds for all $s\leq-1$, where $L(a)=\int_{-\pi}^\pi a^{-1}(e^{i\theta})\,d\theta$.
\end{theorem}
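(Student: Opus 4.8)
\textit{Plan.} Put $\mu_n:=\frac{2\pi}{L(a)}\lambda^0_n$; these are the nonzero eigenvalues of the model operator $M:=\frac{2\pi}{L(a)}\Lambda=\Lambda_{a_0}$, where $a_0\equiv 2\pi/L(a)$ is the constant function with $L(a_0)=L(a)$. The superpolynomial asymptotics (\ref{1.1}) make the series $\sum_{n\ge1}\big(\lambda_n(a)^{-s}-\mu_n^{-s}\big)$ absolutely convergent for every $s\in\C$, and it represents the entire function $\zeta_a(s)-2\big(\tfrac{L(a)}{2\pi}\big)^s\zeta_R(s)$; so it is enough to show
$$
\sum_{n=1}^{\infty}\big(\lambda_n(a)^{\,p}-\mu_n^{\,p}\big)\ \ge\ 0\qquad\text{for all real }p\ge1 .
$$
Since everything is invariant under $a\mapsto ca$, one may normalise $\int_{\S}\log a\,d\theta=0$; Jensen's inequality then gives $L(a)\ge2\pi$ with equality exactly for $a\equiv1$, i.e. for the round disc.

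\textit{Main step.} I would prove the displayed inequality by comparing $\Lambda_a=a^{1/2}\Lambda a^{1/2}$ with $M$ through the explicit diagonalisation $\Lambda=|D|$ of the disc operator ($|D|e^{in\theta}=|n|e^{in\theta}$). Writing $a=e^{2\phi}$ and expanding $\zeta_a(-p)=\mathrm{Tr}[\Lambda_a^{\,p}]$ about $\phi\equiv0$, one finds that the first variation of $\zeta_a(-p)-2\big(\tfrac{L(a)}{2\pi}\big)^{-p}\zeta_R(-p)$ at the disc vanishes and that its second variation is a nonnegative quadratic form: it is a sum of regularised traces $\mathrm{Tr}[\Lambda^{i}\phi\,\Lambda^{p-i}\phi]$ with weights that are divided differences of $x\mapsto x^{p-1}$ — nonnegative precisely because $p\ge1$ — corrected by a scalar term coming from the side condition $L(a)=\mathrm{const}$, and after integration by parts on $\S$ this reduces to the elementary spectral-gap (Wirtinger--Poincar\'e) inequality $(\Lambda\phi,\phi)_{L^2(\S)}\ge\|\phi-\bar\phi\|^2_{L^2(\S)}$ — equivalently, to $\lambda_1(\D)=1$ — and its higher-order companions $|n|^{\,p}\ge1$ $(n\ne0)$. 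The real content is then to push this through all orders: to exhibit $\zeta_a(-p)-2\big(\tfrac{L(a)}{2\pi}\big)^{-p}\zeta_R(-p)$ as a convergent sum of terms each rendered manifestly nonnegative by spectral-gap inequalities on $\S$ together with the convexity of $x\mapsto x^{p}$ for $p\ge1$. Equality should propagate back to $\phi\equiv0$.

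\textit{Main obstacle.} The delicate part is exactly this resummation. Because $\Lambda=|D|$ has the non-smooth symbol $|\xi|$, the spectral zeta values $\zeta_a(-m)$ pick up anomalous corrections beyond the naive diagonal sums, so the coefficients of the expansion have to be identified carefully; and non-negativity must hold uniformly in the real parameter $p\ge1$, not merely at integers. If a fully uniform argument turns out to be awkward, a fallback is to settle first the integer values $p=m$ (where $\zeta_a(-m)$ is a local heat invariant — an explicit integral of a polynomial in $a,a',a'',\dots,a^{-1}$ — and the inequality becomes a concrete Fourier-side estimate), and then interpolate to $p\in(m,m+1)$ using the Cauchy representation $x^{p}=c_{m,p}\int_0^{\infty}t^{\,p-m-1}\,\frac{x^{m+1}}{x+t}\,dt$, which rewrites $\zeta_a(-p)$ in terms of $\zeta_a(0),\dots,\zeta_a(-m)$ and $\mathrm{Tr}[(\Lambda_a+t)^{-1}]$; but the alternating signs there make it less transparent than the direct approach, which I would prefer.
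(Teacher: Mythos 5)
Your reduction step is sound and matches the paper's framework: by the asymptotics (\ref{1.1}) the series $\sum_{n\ge1}\bigl(\lambda_n(a)^{p}-\mu_n^{p}\bigr)$, $\mu_n=\frac{2\pi}{L(a)}\lambda^0_n$, converges absolutely, is entire in $p$, and coincides with the analytic continuation of $\zeta_a(-p)-2\bigl(\frac{L(a)}{2\pi}\bigr)^{-p}\zeta_R(-p)$ (this is the content of Lemma \ref{L2.2}), so it suffices to prove nonnegativity of the sum for all real $p\ge1$. But note at once that a term-by-term comparison cannot work: Weinstock's inequality (\ref{1.10}) gives $\lambda_1(a)\le\mu_1$, and (\ref{1.12}) shows the low eigenvalues are generally \emph{below} the model values, so individual differences $\lambda_n(a)^p-\mu_n^p$ are of mixed sign and the whole difficulty is to produce the cancellation globally.

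This is precisely where your proposal has a genuine gap. The ``main step'' only computes the first and second variations of $\zeta_a(-p)$ at the disc and then asserts that the full difference can be exhibited ``to all orders'' as a convergent sum of manifestly nonnegative terms; you yourself identify this resummation as the main obstacle, and nothing in the outline resolves it. Vanishing first variation plus a nonnegative second variation at $a\equiv1$ is a local, second-order statement; it does not yield the global inequality without some convexity or monotonicity along a path from $\mathbf 1$ to $a$, which is not provided (and the quadratic form necessarily has a three-dimensional kernel coming from the M\"obius group, since every $a$ conformally equivalent to $\mathbf 1$ has $\zeta_a\equiv2\zeta_R$; in particular ``equality propagates back to $\phi\equiv0$'' is false as stated). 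The fallback is no easier: at integers the explicit Fourier-side formulas (\ref{3.2})--(\ref{3.3}) are available, but extracting nonnegativity directly from them is exactly the hard combinatorial problem (Conjecture \ref{C3.1} is its real-valued analogue, open for $k\ge2$; Edward settled only $k=1$), and the interpolation to non-integer $p$ is not carried out. The missing idea, which the paper supplies, is to evaluate the trace of the smoothing operator $\Lambda_a^s-|D_a|^s$ not in the eigenbasis of $\Lambda_a$ but in the explicit orthonormal basis $\varphi_n$ of eigenfunctions of $D_a=a^{1/2}Da^{1/2}$ (Lemma \ref{L2.1}): there $(|D_a|^s\varphi_n,\varphi_n)_{L^2}=|n|^s$, while $(\Lambda_a\varphi_n,\varphi_n)_{L^2}\ge|n|$ follows from (\ref{2.0}), and Jensen's inequality (Lemma \ref{L2.3}) upgrades this to $(\Lambda_a^s\varphi_n,\varphi_n)_{L^2}\ge|n|^s$ for all real $s\ge1$, making every diagonal term nonnegative at once, uniformly in $s$. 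Without an ingredient of this kind your plan does not prove the theorem.
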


\begin{theorem} \label{Th1.2'}
For a positive function $a\in C^\infty({\mathbb S})$, the following alternative is valid. Either

(1) $\zeta_a(s)$ is identically equal to $2\left(\frac{L(a)}{2\pi}\right)^s\,\zeta_R(s)$

\noindent
or

(2) the difference
$
\Big(\frac{L(a)}{2\pi}\Big)^{-s}\zeta_a(s)-2\zeta_R(s)
$ satisfies
\begin{equation}
{d\over ds}\Big[\Big(\frac{L(a)}{2\pi}\Big)^{s}\zeta_a(-s)-2\zeta_R(-s)\Big]\geq C(\alpha)e^{\alpha s}\quad\mbox{\rm for}\quad s\geq1
                             \label{1.2}
\end{equation}
and
\begin{equation}
\Big(\frac{L(a)}{2\pi}\Big)^{s}\zeta_a(-s)-2\zeta_R(-s)\geq C_1(\alpha)e^{\alpha s}\quad\mbox{\rm for}\quad s\geq1
                             \label{1.3}
\end{equation}
for any positive $\alpha$ with some positive constants $C(\alpha),C_1(\alpha)$.
\end{theorem}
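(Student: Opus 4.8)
The plan is to reduce everything to the large-$n$ asymptotics of the gap $\lambda_n(a)-\frac{2\pi}{L(a)}\lambda_n^0$, and then to compute the Mellin-type transform that produces $\zeta_a$. Write $L=L(a)$ and set $\mu_n=\frac{L}{2\pi}\lambda_n(a)$, so that $\zeta_a(-s)=\sum_{n\ge1}\mu_n^{\,s}\cdot(2\pi/L)^{s}$, hence $\big(\frac{L}{2\pi}\big)^{s}\zeta_a(-s)-2\zeta_R(-s)=\sum_{n\ge1}\big(\mu_n^{\,s}-2\nu_n^{\,s}\big)$, where $\nu_1=\nu_2=1,\ \nu_3=\nu_4=2,\dots$ is the enumeration of $\{1,1,2,2,3,3,\dots\}$, i.e. $\nu_{2k-1}=\nu_{2k}=k$. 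By \eqref{1.1} we have $\mu_n=\nu_n+\varepsilon_n$ with $\varepsilon_n=O(n^{-N})$ for every $N$; moreover one also has good control of the first Steklov eigenvalue $\mu_1$ and, more importantly, the fact (to be extracted from the earlier analysis, or proved along the way) that $\mu_n\ge\nu_n$ for all $n$ with equality for all $n$ only in case (1) — this is exactly the content of Theorem \ref{Th1.2}/\ref{Th1.1} applied in the form $\zeta_a(-s)\ge 2(L/2\pi)^{-s}\zeta_R(-s)$ combined with a monotonicity/term-by-term comparison. The first thing I would establish is therefore the dichotomy: either $\mu_n=\nu_n$ for all $n$ (case (1)), or there is a smallest index $n_0$ with $\mu_{n_0}>\nu_{n_0}$, say $\mu_{n_0}=\nu_{n_0}+\delta$ with $\delta>0$, and $\mu_n\ge\nu_n$ for all $n$ (the non-strict inequality for all $n$ being the combinatorial heart, which I expect to borrow from the proof of Theorem \ref{Th1.2}).

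Granting that, in case (2) I would write
\[
F(s):=\Big(\frac{L}{2\pi}\Big)^{s}\zeta_a(-s)-2\zeta_R(-s)=\sum_{n\ge1}\big(\mu_n^{\,s}-\nu_n^{\,s}\big)\ \ (s\le -1\ \text{originally, then all }s\text{ by entire continuation}),
\]
and split off the $n_0$-th term: $F(s)=\big(\mu_{n_0}^{\,s}-\nu_{n_0}^{\,s}\big)+\sum_{n\ne n_0}\big(\mu_n^{\,s}-\nu_n^{\,s}\big)$. For $s\ge1$ the isolated term is $\ge \nu_{n_0}^{\,s}\big((1+\delta/\nu_{n_0})^{s}-1\big)\ge c\,\rho^{s}$ for suitable $\rho>\nu_{n_0}\ge1$ and $c>0$; this already beats $e^{\alpha s}$ once one checks the remaining sum does not cancel it. For the remaining sum, the terms with $\mu_n\ge\nu_n$ are $\ge0$, so they only help; the subtlety is purely that the series $\sum_n\mu_n^{\,s}$ and $\sum_n\nu_n^{\,s}$ individually diverge for $s\ge1$, and $F(s)$ is defined by analytic continuation, not by the pointwise sum. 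So the real work is to justify that $F(s)$ is, for large real $s$, dominated from below by the contribution of the first few eigenvalues. I would handle this by the standard contour-integral representation: for each fixed $n$, $\mu_n^{\,s}=\frac{1}{\Gamma(-s)}\int_0^\infty t^{-s-1}e^{-\mu_n t}\,dt$ up to the usual care, so that, with the theta-like functions $\Theta_a(t)=\sum_n e^{-\mu_n t}$ and $\Theta_0(t)=2\sum_{k\ge1}e^{-kt}=\frac{2}{e^{t}-1}$, one gets a representation $F(s)=\frac{1}{\Gamma(-s)}\int_0^\infty t^{-s-1}\big(\Theta_a(t)-\Theta_0(t)\big)\,dt$ valid after the continuation, where by \eqref{1.1} the integrand is exponentially small as $t\to0^+$ and is $\sim (\mu_1^{\text{-related}})$-dominated as $t\to\infty$. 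Since $\Theta_a(t)-\Theta_0(t)=\sum_n(e^{-\mu_n t}-e^{-\nu_n t})$ and each summand with $\mu_n\ge\nu_n$ is $\le0$ for $t>0$, with the $n_0$ term strictly negative, one sees $\Theta_a(t)-\Theta_0(t)\le e^{-\mu_{n_0}t}-e^{-\nu_{n_0}t}<0$ for all $t>0$; feeding this, together with $1/\Gamma(-s)<0$ alternating — here I would instead use the differentiated form to fix the sign.

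This is why the theorem is stated with the derivative: $\frac{d}{ds}F(s)=\sum_n\big(\mu_n^{\,s}\log\mu_n-\nu_n^{\,s}\log\nu_n\big)$, and the cleanest route is the representation
\[
\frac{d}{ds}F(s)=\int_0^\infty t^{-s}\,\big(\Theta_0(t)-\Theta_a(t)\big)\,\frac{dt}{\Gamma(1-s)}\cdot(\text{sign bookkeeping})
\]
— but rather than chase $\Gamma$-signs, I would simply estimate $\frac{d}{ds}F$ termwise after noting the series $\sum_n\big(\mu_n^{\,s}\log\mu_n-\nu_n^{\,s}\log\nu_n\big)$ converges absolutely for all real $s$ (because $\mu_n-\nu_n=O(n^{-N})$ makes the general term $O(n^{-N}\cdot n^{s-1}\log n)$, summable for $s<N-1$, hence for all $s$ as $N$ is arbitrary; note $\log\nu_1=0$ kills the would-be divergent piece). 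Then for $s\ge1$,
\[
\frac{d}{ds}F(s)=\sum_{n\ge1}\big(\mu_n^{\,s}\log\mu_n-\nu_n^{\,s}\log\nu_n\big)\ \ge\ \big(\mu_{n_0}^{\,s}\log\mu_{n_0}-\nu_{n_0}^{\,s}\log\nu_{n_0}\big)\ +\ \sum_{n\ne n_0}\big(\mu_n^{\,s}\log\mu_n-\nu_n^{\,s}\log\nu_n\big),
\]
and each tail term is $\ge0$ since $\mu_n\ge\nu_n\ge1$ forces $x\mapsto x^{s}\log x$ monotone there; the isolated term is $\ge c\,\rho^{s}$ with $\rho=\mu_{n_0}>1$ (or, if unluckily $n_0\in\{1,2\}$ so that $\nu_{n_0}=1$, it is even cleaner: $\mu_{n_0}^{s}\log\mu_{n_0}\ge c\rho^s$ outright). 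Since $\rho>1$, for any $\alpha>0$ we get $\rho^s\ge e^{\alpha s}$ for $s\ge s_0(\alpha)$, and adjusting the constant $C(\alpha)$ downward covers $1\le s\le s_0(\alpha)$ by compactness and positivity; this yields \eqref{1.2}. For \eqref{1.3} I would integrate \eqref{1.2} from $1$ to $s$ and add the value $F(1)\ge0$ (which is $\ge$ the isolated term $\mu_{n_0}-\nu_{n_0}=\delta>0$, again by term-by-term comparison with all tail terms nonnegative), giving $F(s)\ge F(1)+\int_1^s C(\alpha')e^{\alpha' u}\,du\ge \frac{C(\alpha')}{\alpha'}\big(e^{\alpha's}-e^{\alpha'}\big)$, and then reabsorbing constants and choosing $\alpha'$ slightly larger than $\alpha$ to kill the $-e^{\alpha'}$ term for large $s$, with compactness handling small $s$. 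The main obstacle, to be addressed first, is the nonnegativity $\mu_n\ge\nu_n$ for every individual $n$ together with strictness for some $n$ unless case (1) holds: the already-proven Theorem \ref{Th1.2} gives this only "on average" through $\zeta_a$, so I expect to need the sharper term-by-term statement — most naturally obtained from the same Hersch–Payne–Schiffer-type or quadratic-form comparison that underlies Theorem \ref{Th1.2}, extracting not just the inequality for the sum but a pointwise domination of the rearranged spectra, which is exactly where the combinatorics of $\{1,1,2,2,\dots\}$ enters.
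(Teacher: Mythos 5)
There are two genuine gaps, and the first is fatal to the route you chose. Your argument hinges on the termwise comparison $\mu_n\ge\nu_n$ for all $n$, i.e. $\lambda_n(a)\ge\frac{2\pi}{L(a)}\lambda_n^0$, which you hope to ``extract from the proof of Theorem \ref{Th1.2}''. That inequality is false precisely in case (2): Weinstock's inequality \eqref{1.10} gives $\lambda_1(\Omega)\le\frac{2\pi}{L(\partial\Omega)}$, i.e. $\mu_1\le\nu_1=1$, with equality only for a round disc, so for every $a$ not conformally equivalent to $\mathbf 1$ the first term of your series $\sum_n(\mu_n^{\,s}-\nu_n^{\,s})$ is strictly \emph{negative} for $s\ge1$. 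The paper's proof never compares eigenvalues individually. Instead it computes the trace of the smoothing operator $R(s)=\Lambda_a^s-|D_a|^s$ in the orthonormal basis $\{\varphi_n\}$ of eigenfunctions of $D_a$ (Lemmas \ref{L2.1} and \ref{L2.2}); in that basis the quadratic-form inequality $(\Lambda_a\varphi_n,\varphi_n)_{L^2}\ge|n|$ does hold (it comes from $(\Lambda u,u)_{L^2}\ge|(Du,u)_{L^2}|$, an analyticity fact, not from any eigenvalue estimate), and each term $(\Lambda_a^s\varphi_n,\varphi_n)_{L^2}-|n|^s$ is then non-negative by convexity (Lemma \ref{L2.3}). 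Your identification of the entire continuation with the pointwise sum $\sum_n(\mu_n^{\,s}-\nu_n^{\,s})$ is fine thanks to \eqref{1.1}, but the sign structure you need for that sum is simply not there.

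Second, even granting a strict gap at one fixed index $n_0$, your step ``since $\rho>1$, for any $\alpha>0$ we get $\rho^s\ge e^{\alpha s}$ for $s\ge s_0(\alpha)$'' is false whenever $\alpha>\ln\rho$; a single fixed gap yields growth $e^{\alpha s}$ only for $\alpha<\ln\mu_{n_0}$, while the theorem asserts the bound for \emph{every} $\alpha>0$. This is exactly why the paper proves Lemma \ref{L2.6}: if $a$ is not conformally equivalent to $\mathbf 1$, then $(\Lambda_a\varphi_{n_0},\varphi_{n_0})_{L^2}>n_0$ for arbitrarily large $n_0$ (equality at two consecutive indices already forces conformal triviality via the argument-principle construction of Lemmas \ref{L2.5}--\ref{L2.6}), and the rate $\alpha=\ln n_0$ can then be taken as large as desired before integrating to get \eqref{1.3}. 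Any repair of your eigenvalue-based approach would need an analogous mechanism producing strict deviations at arbitrarily large indices; the mere dichotomy ``equal for all $n$ or strict somewhere'' cannot deliver \eqref{1.2} for all $\alpha$.
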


Theorem \ref{Th1.2} is actually a corollary of Theorem \ref{Th1.2'}. Nevertheless, we will first present the proof of Theorem \ref{Th1.2} and then we will show how Theorem \ref{Th1.2'} can be proved by some specification of the same arguments.

Given a Riemannian metric $g$ on the unit disc ${\mathbb D}$, let $\Delta_g$ be the Laplace -- Beltrami operator of the metric. The {\it Dirichlet-to-Neumann operator of the metric}
$$
\Lambda_g:C^\infty({\mathbb S})\rightarrow C^\infty({\mathbb S})
$$
is defined by $\Lambda_gf=\frac{\partial u}{\partial\nu}$, where $\nu$ is the unit outer normal to ${\mathbb S}$ with respect to the metric $g$ and $u$ is the solution to the Dirichlet problem
$$
\Delta_gu=0\quad\mbox{\rm in}\quad{\mathbb D},\quad u|_{\mathbb S}=f.
$$
The {\it Steklov spectrum of the metric} $g$ is again non-negative and discrete
$$
\mbox{\rm Sp}(\Delta_g)=\{0=\lambda_0(g)<\lambda_1(g)\leq\lambda_2(g)\leq\dots\}
$$
and the {\it zeta function of the metric}
$$
\zeta_g(s)=\mbox{\rm Tr}[\Lambda_g^{-s}]=\sum\limits_{n=1}^\infty\big(\lambda_n(g)\big)^{-s}\quad \mbox{\rm for}\quad\mbox{\rm Re}\,s>1
$$
again extends to a meromorphic function on ${\mathbb C}$. See \cite[Section 2]{JS} for details. Theorems \ref{Th1.1} and \ref{Th1.2} are equivalent to the following

\begin{theorem} \label{Th1.3}
Given a Riemannian metric $g$ on the unit disc ${\mathbb D}$, the inequality
$$
\zeta_g(s)\geq2\left(\frac{L({\mathbb S})}{2\pi}\right)^s\,\zeta_R(s)
$$
holds for all $s\leq-1$, where $L({\mathbb S})$ is the length of ${\mathbb S}$ in the metric $g$.
\end{theorem}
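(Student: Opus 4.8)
The plan is to deduce Theorem \ref{Th1.3} from Theorem \ref{Th1.2} through the conformal invariance of the two-dimensional Dirichlet problem, that is, through the very mechanism behind the remark made above that conformally equivalent metrics carry the same Steklov spectrum. First I would fix global isothermal coordinates on the disc: for a $C^\infty$ metric $g$ on $\D$ there is a diffeomorphism $\Psi:\D\to\D$, smooth up to the boundary and carrying $\S$ onto $\S$, with $\Psi^*g=\rho^2(dx^2+dy^2)$ for some positive $\rho\in C^\infty(\overline{\D})$. The boundary regularity of $\Psi$ and of $\rho$ is the one delicate point; it is classical for a simply connected compact surface with smooth boundary and is exactly the fact invoked in \cite[Section 2]{JS}.

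Next I would transport the Dirichlet-to-Neumann operator to the flat disc. Since $\Psi$ is an isometry of $(\D,\rho^2(dx^2+dy^2))$ onto $(\D,g)$ and, in two dimensions, $\Delta_{\rho^2(dx^2+dy^2)}=\rho^{-2}\Delta_0$ has the same harmonic functions as $\Delta_0$, a function $u$ is $g$-harmonic if and only if $u\circ\Psi$ is Euclidean harmonic. Moreover the $\rho^2(dx^2+dy^2)$-unit outer normal to $\S$ equals $\rho^{-1}$ times the Euclidean unit outer normal, so the $g$-normal derivative of $u$ pulls back under $\Psi$ to $\rho^{-1}|_\S$ times the Euclidean normal derivative of $u\circ\Psi$. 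Writing $\v=\Psi|_\S:\S\to\S$ and $a=(\rho|_\S)^{-1}\in C^\infty(\S)$, this yields $\v^*\Lambda_g\,\v^{*-1}=a\Lambda$. As $a\Lambda=a^{1/2}(a^{1/2}\Lambda a^{1/2})a^{-1/2}=a^{1/2}\Lambda_a\,a^{-1/2}$ is similar to $\Lambda_a$, we obtain $\mbox{\rm Sp}(\Delta_g)=\mbox{\rm Sp}(a)$, whence $\zeta_g=\zeta_a$ on $\{\mbox{\rm Re}\,s>1\}$ and then, by meromorphic continuation, on all of $\C$.

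It remains to match the geometric constant. Parametrising $\S$ by $\theta\mapsto e^{i\theta}$, the $g$-length of $\S$ equals its $\rho^2(dx^2+dy^2)$-length, that is, $\int_{-\pi}^\pi\rho(e^{i\theta})\,d\theta=\int_{-\pi}^\pi a^{-1}(e^{i\theta})\,d\theta=L(a)$; hence $L(\S)=L(a)$. Substituting $\zeta_g=\zeta_a$ and $L(\S)=L(a)$ into inequality (\ref{th1}) of Theorem \ref{Th1.2} applied to this $a$ gives Theorem \ref{Th1.3} verbatim; one equally recovers Theorem \ref{Th1.1} by taking for $g$ the Euclidean metric carried to $\D$ by a biholomorphism $\Omega\to\D$, with $a=|\Phi'|^{-1}$ as in the text.

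The main obstacle is purely the first step: producing an isothermal chart that is a bona fide diffeomorphism of $\D$ onto itself, smooth up to the boundary, so that the pull-back identities above are legitimate and $a$ is genuinely a positive $C^\infty$ function on $\S$. Everything after that is the same bookkeeping as in the planar-domain dictionary recalled before Theorem \ref{Th1.2}, and no new analytic input beyond Theorem \ref{Th1.2} itself is needed.
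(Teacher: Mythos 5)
Your proposal is correct and follows essentially the route the paper itself takes: the paper reduces Theorem \ref{Th1.3} to Theorem \ref{Th1.2} through exactly this dictionary (isothermal coordinates giving $\psi^*\Lambda_g\psi^{*-1}=a\Lambda$ with $a=(\rho|_{\mathbb S})^{-1}$ and $L({\mathbb S})=L(a)$), deferring the boundary-regular isothermal chart to \cite[Section 2]{JS}, and then proves Theorem \ref{Th1.2} directly.
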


Theorems \ref{Th1.1'} and \ref{Th1.2'} are equivalent to the following

\begin{theorem} \label{Th1.3'}
For a Riemannian metric $g$ on the unit disc ${\mathbb D}$, the following alternative is valid. Either

(1) $g$ is conformally equivalent to the standard Euclidean metric $e$, i.e., there exist a diffeomorphism $\Phi:{\mathbb D}\rightarrow{\mathbb D}$ and positive function $\rho\in C^\infty({\mathbb D})$ such that $\rho|_{\mathbb S}=1$ and $g=\rho\Phi^*e$;

\noindent
or

(2) the difference
$
\Big(\frac{L({\mathbb S})}{2\pi}\Big)^{-s}\zeta_g(s)-2\zeta_R(s)
$ satisfies
$$
{d\over ds}\Big[\Big(\frac{L({\mathbb S})}{2\pi}\Big)^{s}\zeta_g(-s)-2\zeta_R(-s)\Big]\geq C(\alpha)e^{\alpha s}\quad\mbox{\rm for}\quad s\geq1
$$
and
$$
\Big(\frac{L({\mathbb S})}{2\pi}\Big)^{s}\zeta_g(-s)-2\zeta_R(-s)\geq C_1(\alpha)e^{\alpha s}\quad\mbox{\rm for}\quad s\geq1
$$
for any positive $\alpha$ with some positive constants $C(\alpha),C_1(\alpha)$.
\end{theorem}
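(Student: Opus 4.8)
The plan is to deduce Theorem~\ref{Th1.3'} from Theorem~\ref{Th1.2'} by transporting the latter along the conformal correspondence between Riemannian metrics on $\D$ and positive functions on $\S$ described in \cite[Section~2]{JS}, in the same way that Theorems~\ref{Th1.1} and~\ref{Th1.1'} are reduced to their function versions through \cite[Section~3]{JS}.

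First I would recall that construction. Given a metric $g$ on $\D$, global isothermal coordinates on the simply connected surface $(\D,g)$ produce a diffeomorphism $\Psi:\D\to\D$ conformal for the conformal structures of $g$ and $e$, together with a positive $\rho\in C^\infty(\D)$ such that $g=\rho\,\Psi^*e$. Since, in dimension two, harmonic functions and the Dirichlet integral $\int_\D|\nabla u|^2_g\,dV_g$ are conformal invariants, a direct computation — the content of \cite[Section~2]{JS} — shows that $\Lambda_g$ is carried to $\Lambda_a$ by a chain of transformations that leave the Steklov spectrum unchanged (conjugation by the unitary reparametrization of $\S$ induced by $\psi:=\Psi|_\S$, followed by conjugation by multiplication by a power of $\rho|_\S$), where $a=(\rho|_\S\circ\psi^{-1})^{-1/2}$. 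Consequently
$$
\mathrm{Sp}(\Delta_g)=\mathrm{Sp}(a),\qquad \zeta_g=\zeta_a,\qquad L(\S)=\int_\S\sqrt{\rho|_\S}\,|\psi'|\,d\theta=L(a),
$$
and the function $a$ is determined by $g$ only up to the action of $\mathrm{Aut}(\D)$, which changes neither $\zeta_a$ nor $L(a)$.

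With this dictionary in hand, apply Theorem~\ref{Th1.2'} to $a$. Substituting $\zeta_g=\zeta_a$ and $L(\S)=L(a)$, the inequalities~(\ref{1.2}) and~(\ref{1.3}) become verbatim the two inequalities asserted in Theorem~\ref{Th1.3'}, so the only point left is to match the exceptional alternatives. In one direction, if $g=\rho\,\Phi^*e$ with $\rho|_\S\equiv1$, then along $\S$ the conformal factor is trivial, hence $\nu_g=\nu_{\Phi^*e}$ and $\frac{\partial u}{\partial\nu_g}\big|_\S=\frac{\partial u}{\partial\nu_{\Phi^*e}}\big|_\S$; since $\Phi$ is an isometry from $(\D,\Phi^*e)$ to $(\D,e)$, this yields $\Lambda_g=\phi^*\Lambda\phi^{*-1}$ with $\phi=\Phi|_\S$, an operator unitarily equivalent to $\Lambda$ on $L^2(\S)$, so $\mathrm{Sp}(\Delta_g)=\mathrm{Sp}(\D)$, $L(\S)=\int_\S|\phi'|\,d\theta=2\pi$, and $\zeta_g(s)=2\zeta_R(s)=2\big(L(\S)/2\pi\big)^s\zeta_R(s)$, which is alternative~(1). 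Conversely, if $\zeta_g(s)\equiv2\big(L(\S)/2\pi\big)^s\zeta_R(s)$, uniqueness of Dirichlet series forces $\mathrm{Sp}(\Delta_g)=\frac{2\pi}{L(\S)}\{0,1,1,2,2,\dots\}$; in particular $\lambda_1(a)=\lambda_1(g)=2\pi/L(a)$, the equality case of Weinstock's inequality~(\ref{1.10}) for $\Lambda_a$ (obtained from~(\ref{1.10}) through the domain-to-function dictionary), which holds only when $a$ is constant; unwinding the dictionary, $g$ is conformally equivalent to $e$ in the sense of alternative~(1).

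The mathematical substance of the statement is entirely contained in Theorem~\ref{Th1.2'}, whose proof is the bulk of the paper; the work here is bookkeeping. The step that requires care is the setting up of the correspondence $g\leftrightarrow a$: one must check that the successive conjugations really are measure-preserving and really do intertwine $\Lambda_g$ with $\Lambda_a$, that the non-uniqueness of $\Psi$ under $\mathrm{Aut}(\D)$ is harmless, and — in the exceptional case — that the normalization $\rho|_\S\equiv1$ in alternative~(1) of Theorem~\ref{Th1.3'} is correctly unwound against the condition ``$a$ constant'' of Theorem~\ref{Th1.2'}. Granting this correspondence, the transfer of~(\ref{1.2})--(\ref{1.3}) and the identification of the exceptional case through the rigidity in Weinstock's inequality are immediate.
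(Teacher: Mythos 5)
Your overall route---transporting Theorem \ref{Th1.2'} along the correspondence $g\leftrightarrow a$ of \cite[Section 2]{JS}---is exactly the reduction the paper intends (the paper itself only asserts the equivalence), and your dictionary is set up correctly: writing $g=\rho\,\Psi^*e$ and $a=(\rho|_{\mathbb S}\circ\psi^{-1})^{-1/2}$ one does get $\mathrm{Sp}(\Delta_g)=\mathrm{Sp}(a)$, $\zeta_g=\zeta_a$ and $L({\mathbb S})=L(a)$, so the transfer of (\ref{1.2})--(\ref{1.3}) is unproblematic. The gap is in the identification of the exceptional case, which is the only implication actually needed (if alternative (1) fails, (2) must hold, i.e.\ failure of (2) must force (1)). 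Your appeal to Weinstock rigidity is both misquoted and out of range: equality $\lambda_1(a)=2\pi/L(a)$ forces $a$ to be \emph{conformally equivalent} to a constant, not constant (e.g.\ $a=|M'|^{-1}$ on ${\mathbb S}$ for a M\"obius transformation $M$ gives $\mathrm{Sp}(a)=\mathrm{Sp}({\mathbb D})$ and hence equality, with $a$ nonconstant); moreover the equality case of (\ref{1.10}) is cited for embedded planar domains but applied to an arbitrary positive $a\in C^\infty({\mathbb S})$, equivalently an arbitrary metric (possibly corresponding only to a multisheet domain), which the cited result does not cover. The paper's own mechanism for this step is internal and should be used instead: by Lemma \ref{L2.6} and the proof of Theorem \ref{Th1.2'}, if the growth alternative fails for $a$ normalized by (\ref{2.1}), then $a$ is conformally equivalent to ${\mathbf 1}$; no import of Weinstock is needed.

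The concluding ``unwinding the dictionary'' is precisely the delicate point and is asserted rather than proved. From ``$a$ conformally equivalent to a constant'' one only obtains a representation $g=\rho\,\Phi^*e$ with $\rho|_{\mathbb S}$ \emph{constant}; the normalization $\rho|_{\mathbb S}\equiv1$ demanded in alternative (1) additionally pins down the scale, $L_g({\mathbb S})=2\pi$. Since condition (2) is invariant under $g\mapsto c^2g$ while condition (1) as written is not (consider $g=c^2e$), any proof must contain an explicit normalization/rescaling step at this point, and yours passes over it. Concretely, you should first reduce to $L_g({\mathbb S})=2\pi$ (equivalently (\ref{2.1}) for $a$), conclude from Lemma \ref{L2.6} and the proof of Theorem \ref{Th1.2'} that failure of (2) gives $a=|\psi_0'|^{-1}$ on ${\mathbb S}$ for some M\"obius or conjugate-M\"obius $\psi_0$, and then produce the modified pair $(\Phi,\rho)$---obtained by composing $\Psi$ with that M\"obius map, which changes $\rho|_{\mathbb S}$ by the factor $|\psi_0'|^{-2}\circ\psi$---for which $\rho|_{\mathbb S}\equiv1$. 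As it stands, the exceptional-case matching, the one piece of content beyond Theorem \ref{Th1.2'}, is not established.
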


We are grateful to G. Rozenblum for a discussion of some questions related to the paper.

\section{Proof of Theorems \ref{Th1.2} and \ref{Th1.2'}}

For a function $u$ on the unit circle ${\mathbb S}=\{e^{i\theta}\}$, we will write $u(\theta)$ instead of $u(e^{i\theta})$.
Introduce the operators
$$
D=-i\frac{d}{d\theta}:C^\infty({\mathbb S})\rightarrow C^\infty({\mathbb S}),\quad
\Lambda=\Lambda_{\mathbb D}=\Big(-\frac{d^2}{d\theta^2}\Big)^{1/2}:C^\infty({\mathbb S})\rightarrow C^\infty({\mathbb S}).
$$
Both $D$ and $\Lambda$ are self-adjoint operators with respect to the standard $L^2$-product
$$
(u,v)_{L^2}=\int\limits_{-\pi}^\pi u(\theta)\bar v(\theta)\,d\theta
$$
and satisfy
\begin{equation}
(\Lambda u,u)_{L^2}\geq\big|(Du,u)_{L^2}\big|
                             \label{2.0}
\end{equation}
for every $u\in C^\infty({\mathbb S})$. This follows from equalities $De^{in\theta}=n\,e^{in\theta}$ and $\Lambda e^{in\theta}=|n|\,e^{in\theta}$ for $n\in{\mathbb Z}$.

It suffices to prove Theorem \ref{Th1.2} for a function
$0<a\in C^\infty({\mathbb S})$ normalized by the condition
\begin{equation}
\frac{1}{2\pi}\int\limits_{-\pi}^\pi a^{-1}(\theta)\,d\theta=1.
                                             \label{2.1}
\end{equation}
This condition is always assumed in the current section.
Given such a function, we define the operators  $D_a,\Lambda_a:C^\infty({\mathbb S})\rightarrow C^\infty({\mathbb S})$ by
$$
D_a=a^{1/2}Da^{1/2},\quad  \Lambda_a=a^{1/2}\Lambda a^{1/2}.
$$

\begin{lemma} \label{L2.1}
For every integer $n$, the function
\begin{equation}
\varphi_n(\theta)=\frac{1}{\sqrt{2\pi a(\theta)}}\exp\Big(in\int\limits_0^\theta a^{-1}(s)\,ds\Big)
                         \label{2.4}
\end{equation}
is the eigenfunction of the operator $D_a$ associated to the eigenvalue $n$, i.e.
\begin{equation}
D_a\varphi_n=n\varphi_n.
                         \label{2.5}
\end{equation}
The inequality
\begin{equation}
(\Lambda_a\varphi_n,\varphi_n)_{L^2}\geq|n|
                         \label{2.6}
\end{equation}
holds for every integer $n$.
The family $\{\varphi_n\}_{n\in{\mathbb Z}}$ is an orthonormal basis of $L^2({\mathbb S})$.
\end{lemma}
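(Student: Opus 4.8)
The plan is to verify the three assertions in order, the only point needing care being that $\varphi_n$ must descend to a well-defined function on ${\mathbb S}$ — which is precisely where the normalization (\ref{2.1}) is used, and the reason Theorem \ref{Th1.2} may be assumed under (\ref{2.1}) without loss of generality. Set $\psi(\theta)=\int_0^\theta a^{-1}(s)\,ds$, so that $\varphi_n(\theta)=(2\pi a(\theta))^{-1/2}e^{in\psi(\theta)}$ and $a^{1/2}\varphi_n=(2\pi)^{-1/2}e^{in\psi}$. Since $0<a\in C^\infty({\mathbb S})$ is $2\pi$-periodic, $\psi$ is a smooth strictly increasing function with $\psi'=a^{-1}$, and (\ref{2.1}) gives $\psi(\theta+2\pi)=\psi(\theta)+2\pi$; hence $e^{in\psi}$ is $2\pi$-periodic for every $n\in{\mathbb Z}$ and $\varphi_n\in C^\infty({\mathbb S})$.

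With this in hand, (\ref{2.5}) is a direct computation:
$$
D_a\varphi_n=a^{1/2}D\bigl((2\pi)^{-1/2}e^{in\psi}\bigr)=a^{1/2}\bigl(n\psi'\,(2\pi)^{-1/2}e^{in\psi}\bigr)=n\,a^{-1/2}(2\pi)^{-1/2}e^{in\psi}=n\varphi_n .
$$

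For the orthonormal-basis claim I would exhibit the conjugating unitary explicitly. Let $e_n(\theta)=(2\pi)^{-1/2}e^{in\theta}$ be the standard orthonormal basis of $L^2({\mathbb S})$, and define $W:L^2({\mathbb S})\to L^2({\mathbb S})$ by $(Wf)(\theta)=a^{-1/2}(\theta)\,f(\psi(\theta))$. Since $\psi$ descends to a diffeomorphism of ${\mathbb S}$ with $\psi'=a^{-1}$, the substitution $\phi=\psi(\theta)$ gives $\|Wf\|^2_{L^2}=\int_{-\pi}^{\pi}a^{-1}|f\circ\psi|^2\,d\theta=\int|f|^2\,d\phi=\|f\|^2_{L^2}$, so $W$ is unitary, and $We_n=\varphi_n$ by construction; hence $\{\varphi_n\}_{n\in{\mathbb Z}}$ is an orthonormal basis. (Equivalently, $(\varphi_n,\varphi_m)_{L^2}=\frac{1}{2\pi}\int_{-\pi}^{\pi}e^{i(n-m)\psi(\theta)}a^{-1}(\theta)\,d\theta=\delta_{nm}$ by the same change of variables.)

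Finally, for (\ref{2.6}) I would reduce to (\ref{2.0}). Writing $u=a^{1/2}\varphi_n=(2\pi)^{-1/2}e^{in\psi}$, self-adjointness of the real multiplication operator $a^{1/2}$ gives $(\Lambda_a\varphi_n,\varphi_n)_{L^2}=(\Lambda u,u)_{L^2}$. On the other hand $Du=-iu'=n\psi'u=na^{-1}u$, so
$$
(Du,u)_{L^2}=n\int_{-\pi}^{\pi}a^{-1}(\theta)|u(\theta)|^2\,d\theta=\frac{n}{2\pi}\int_{-\pi}^{\pi}a^{-1}(\theta)\,d\theta=n
$$
by (\ref{2.1}). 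Applying (\ref{2.0}) to $u$ then yields $(\Lambda_a\varphi_n,\varphi_n)_{L^2}=(\Lambda u,u)_{L^2}\geq|(Du,u)_{L^2}|=|n|$. There is no genuine obstacle here: the mechanism is simply that $D_a$ is unitarily equivalent, via the reparametrization $\psi$, to the constant-coefficient operator $D$, combined with the elementary pointwise comparison (\ref{2.0}) between $\Lambda$ and $D$; the only delicate step is the periodicity check for $\varphi_n$ described above.
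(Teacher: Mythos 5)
Your proof is correct and takes essentially the same route as the paper's: periodicity of $\varphi_n$ from the normalization \eqref{2.1}, a direct computation for \eqref{2.5}, the change of variables $\alpha=\psi(\theta)$ (which you package as the explicit unitary $W$ conjugating the standard basis $\{e_n\}$ onto $\{\varphi_n\}$) for the orthonormal-basis claim, and inequality \eqref{2.0} applied to $u=a^{1/2}\varphi_n$ for \eqref{2.6}. The only cosmetic difference is that the paper establishes completeness by showing that a function orthogonal to all $\varphi_n$ has vanishing Fourier coefficients after the same substitution, which is precisely the content of the unitarity (in particular surjectivity, clear from the explicit inverse) of your $W$.
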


\begin{proof}
First of all, being defined by (\ref{2.4}), $\varphi_n(\theta)$ is a $2\pi$-periodic function as is seen from (\ref{2.1}), i.e., $\varphi_n\in C^\infty({\mathbb S})$.
Equality (\ref{2.5}) is proved by a straightforward calculation on the base of definition (\ref{2.4}). It implies $(\varphi_n,\varphi_m)_{L^2}=0$ for $n\neq m$.
On using (\ref{2.1}), we also check
$\|\varphi_n\|^2_{L^2}=1$.
Thus, $\{\varphi_n\}_{n\in{\mathbb Z}}$ is an orthonormal system in $L^2({\mathbb S})$. Assume a function $u\in L^2({\mathbb S})$ to be orthogonal to all $\varphi_n$, i.e.,
$$
(u,\varphi_n)_{L^2}=\frac{1}{\sqrt{2\pi}}\int\limits_{-\pi}^\pi u(\theta)a^{-1/2}(\theta)\exp\Big(-in\int\limits_0^\theta a^{-1}(s)\,ds\Big)d\theta=0
\quad (n\in{\mathbb Z}).
$$
Change the integration variable in this equality by
$\alpha=\alpha(\theta)=\int_0^\theta a^{-1}(s)\,ds$ and introduce the functions $b\in C^\infty({\mathbb S})$ and $v\in L^2({\mathbb S})$ by $b(\alpha(\theta))=a(\theta)$ and $v(\alpha(\theta))=u(\theta)$. The previous formula takes the form
$$
\frac{1}{\sqrt{2\pi}}\int\limits_{-\pi}^\pi b^{1/2}(\alpha)v(\alpha)e^{-in\alpha}\,d\alpha=0
\quad (n\in{\mathbb Z}).
$$
This means that all Fourier coefficients of the function $b^{1/2}v$ are equal to zero. Hence $b^{1/2}v\equiv0$ and $u\equiv0$. We have thus proved $\{\varphi_n\}_{n\in{\mathbb Z}}$ is an orthonormal basis of $L^2({\mathbb S})$.

On using (\ref{2.0}) and (\ref{2.5}), we derive
$$
\begin{aligned}
(\Lambda_a\varphi_n,\varphi_n)_{L^2}&=(a^{1/2}\Lambda a^{1/2}\varphi_n,\varphi_n)_{L^2}=
(\Lambda (a^{1/2}\varphi_n),a^{1/2}\varphi_n)_{L^2}\\
&\geq\big|(D(a^{1/2}\varphi_n),a^{1/2}\varphi_n)_{L^2}\big|
=\big|(D_a\varphi_n,\varphi_n)_{L^2}\big|=|n|.
\end{aligned}
$$
\end{proof}

Both $D_a$ and $\Lambda_a$ are self-adjoint operators. Observe also that $D_a=a^{-1/2}(aD)a^{1/2}$ and $\Lambda_a=a^{-1/2}(a\Lambda)a^{1/2}$. In particular, the operators $D_a$ and $aD$ have coincident spectra as well as the operators $\Lambda_a$ and $a\Lambda$ have coincident spectra.
One easily compute
\begin{equation}
D^2_a=a^2D^2+2a(Da)D+\frac{1}{2}a(D^2a)+\frac{1}{4}(Da)^2.
                             \label{2.2}
\end{equation}

If $\xi$ is the Fourier dual variable for $\theta$, then the full symbol of $\Lambda$ is expressed in coordinates $(\theta,\xi)$ by the formula $\sigma_\Lambda(\theta,\xi)=|\xi|$.
Let us recall the classical formula for the full symbol of the product of two pseudodifferential operators
$$
\sigma_{AB}\sim \sum\limits_{\alpha=0}^\infty \frac{1}{\alpha!}(\partial^\alpha_\xi\sigma_A)D^\alpha\sigma_B.
$$
On using the formula, we compute
$$
\sigma_{\Lambda_a}=a|\xi|+\frac{1}{2}(Da)\,\mbox{sgn}(\xi)
$$
and
$$
\sigma_{\Lambda^2_a}=a^2\xi^2+2a(Da)\xi+\frac{1}{2}a(D^2a)+\frac{1}{4}(Da)^2.
$$
Comparing the last formula with (\ref{2.2}), we see that the operators $D_a^2$ and $\Lambda_a^2$ have coincident full symbols, i.e., the difference
$\Lambda_a^2-D_a^2$ is a smoothing operator.

The operator $D_a^2$ is a non-negative elliptic self-adjoint second order differential operator
with the same eigenvalue spectrum as $\Lambda^2$, i.e.,
$$
\mbox{\rm Sp}(D_a^2)=\{0,1,1,4,4,9,9,\dots\}.
$$
We will denote $|D_a|=(D_a^2)^{1/2}$. Observe that
\begin{equation}
\mbox{\rm Sp}(|D_a|)=\mbox{\rm Sp}(\Lambda)=\{0,1,1,2,2,3,3,\dots\}.
                      \label{2.7'}
\end{equation}

We will need complex powers of operators $|D_a|$ and $\Lambda_a$. There is a small difficulty in defining the powers since these operators are not invertible. As follows from Lemma \ref{L2.1}, null-spaces of $|D_a|$ and $\Lambda_a$ coincide and are equal to the one-dimensional space spanned by the function $\varphi_0$. Let $P_0$ be the orthogonal projection of $L^2({\mathbb S})$ onto that one-dimensional space. Then $|D_a|+P_0$ and $\Lambda_a+P_0$ are invertible elliptic first order differential operators. Therefore the powers $(|D_a|+P_0)^s$ and $(\Lambda_a+P_0)^s$ are well defined for every complex $s$, see \cite{Se} and \cite[Chapter 2]{Sh}.   $(|D_a|+P_0)^s$ and $(\Lambda_a+P_0)^s$ are pseudodifferential operator of order $\mbox{Re}\,s$. With some ambiguity, we will write
$$
|D_a|^s=(|D_a|+P_0)^s-P_0,\quad \Lambda_a^s=(\Lambda_a+P_0)^s-P_0.
$$
This actually coincides with Edward's definition \cite[Section 3]{E}. Observe that $|D_a|^s=(D_a^2)^{s/2}$. Equality (\ref{2.5}) implies
\begin{equation}
|D_a|^{s}\varphi_n=|n|^s\varphi_n\quad(s\in{\mathbb C}).
                         \label{2.8}
\end{equation}

\begin{lemma} \label{L2.2}
For every complex $s$,
$$
R(s)=\Lambda_a^s-|D_a|^{s}
$$
is a smoothing operator with finite trace and
\begin{equation}
\mbox{\rm Tr}[R(s)]=\zeta_a(-s)-2\zeta_R(-s)\quad(s\in{\mathbb C}).
                         \label{2.9}
\end{equation}
If $\{\chi_n\}_{n=0}^\infty$ is an orthonormal basis in $L^2({\mathbb S})$ and $\chi_n\in C^\infty({\mathbb S})\ (n=0,1,\dots)$, then
\begin{equation}
\sum\limits_{n=0}^\infty(R_s\chi_n,\chi_n)_{L^2}=\zeta_a(-s)-2\zeta_R(-s)\quad(s\in{\mathbb C}).
                         \label{2.10}
\end{equation}
\end{lemma}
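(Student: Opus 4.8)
The plan is to reduce the lemma to the single fact that two elliptic operators with the same full symbol have complex powers differing by a smoothing operator, and then to compute the trace of $R(s)$ by analytic continuation from the half-plane $\mbox{\rm Re}\,s<-1$ where the series involved converge. First I would remove the zero modes. Since $\varphi_0$ spans the common null-space of $\Lambda_a$ and $|D_a|$, one has $\Lambda_aP_0=P_0\Lambda_a=0$ and $|D_a|P_0=P_0|D_a|=0$, so $(\Lambda_a+P_0)^2=\Lambda_a^2+P_0$ and $(|D_a|+P_0)^2=D_a^2+P_0$; as both $\Lambda_a+P_0$ and $|D_a|+P_0$ are positive, taking square roots gives $(\Lambda_a+P_0)^s=(\Lambda_a^2+P_0)^{s/2}$ and $(|D_a|+P_0)^s=(D_a^2+P_0)^{s/2}$, and the two projections cancel in the difference:
$$
R(s)=\Lambda_a^s-|D_a|^s=(\Lambda_a^2+P_0)^{s/2}-(D_a^2+P_0)^{s/2}.
$$
Put $A=\Lambda_a^2+P_0$, $B=D_a^2+P_0$; these are positive, invertible, elliptic, self-adjoint pseudodifferential operators of order $2$, and $A-B=\Lambda_a^2-D_a^2$ is smoothing by the symbol computation preceding the lemma. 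So the whole lemma rests on the assertion $(\star)$: $A^z-B^z$ is a smoothing operator for every $z\in\C$ (we then take $z=s/2$).

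To prove $(\star)$ I would use Seeley's contour representation $A^z=\frac{i}{2\pi}\int_\Gamma\lambda^z(A-\lambda)^{-1}\,d\lambda$ for $\mbox{\rm Re}\,z<0$, with $\Gamma$ a contour around $\mbox{\rm Spec}(A)\subset(0,\infty)$, and likewise for $B$; the resolvent identity $(A-\lambda)^{-1}-(B-\lambda)^{-1}=(A-\lambda)^{-1}(B-A)(B-\lambda)^{-1}$ gives
$$
A^z-B^z=\frac{i}{2\pi}\int_\Gamma\lambda^z\,(A-\lambda)^{-1}(B-A)(B-\lambda)^{-1}\,d\lambda.
$$
The resolvents are parameter-dependent pseudodifferential operators with $H^k\to H^k$ norm $O(|\lambda|^{-1})$, while $B-A$ is smoothing and so maps $H^{-k}\to H^k$ for every $k$; hence the integrand is $\mathcal{L}(H^{-k},H^k)$-valued of norm $O(|\lambda|^{-2})$, the integral converges in every $\mathcal{L}(H^{-k},H^k)$ once $\mbox{\rm Re}\,z<1$, and $(\star)$ follows there. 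For arbitrary $z$, pick an integer $m\ge1$ with $\mbox{\rm Re}\,z-m<1$ and write $A^z-B^z=A^m(A^{z-m}-B^{z-m})+(A^m-B^m)B^{z-m}$, where $A^m-B^m=\sum_{j=0}^{m-1}A^j(A-B)B^{m-1-j}$ is smoothing and $A^{z-m}-B^{z-m}$ is smoothing by the previous case; since the composition of a smoothing operator with a pseudodifferential operator is smoothing, $(\star)$ holds for all $z$. Consequently $R(s)$ is smoothing, hence, on the compact manifold $\S$, trace class with $C^\infty$ Schwartz kernel, $\T[R(s)]=\int_\S K_{R(s)}(\theta,\theta)\,d\theta$, and the standard fact that the trace of a trace-class operator equals $\sum_{n\ge0}(R(s)\chi_n,\chi_n)_{L^2}$ for any orthonormal basis (with absolute convergence) will give $(\ref{2.10})$ as soon as $(\ref{2.9})$ is known. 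Moreover $z\mapsto A^z$ is holomorphic, so $s\mapsto\T[R(s)]$ is an entire function.

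For the trace itself, I would work first on $\{\mbox{\rm Re}\,s<-1\}$, i.e. $\mbox{\rm Re}(-s)>1$, where $\Lambda_a^s$ and $|D_a|^s$ are separately trace class, so $\T[R(s)]=\T[\Lambda_a^s]-\T[|D_a|^s]$; by $(\ref{1.0})$, $\T[\Lambda_a^s]=\sum_{n\ge1}\lambda_n(a)^s=\zeta_a(-s)$, and by $(\ref{2.7'})$ together with $(\ref{2.8})$, $\T[|D_a|^s]=\sum_{n\ge1}(\lambda^0_n)^s=2\sum_{k\ge1}k^s=2\zeta_R(-s)$. Thus $(\ref{2.9})$ holds for $\mbox{\rm Re}\,s<-1$. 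The right-hand side $\zeta_a(-s)-2\zeta_R(-s)$ is entire — recalled in the Introduction from \cite{E}, or directly: under the normalization $(\ref{2.1})$, asymptotics $(\ref{1.1})$ gives $\lambda_n(a)-\lambda^0_n=O(n^{-N})$ for every $N$, hence $\lambda_n(a)^s-(\lambda^0_n)^s=O(n^{\mathrm{Re}(s)-N-1})$ and $\sum_n\big(\lambda_n(a)^s-(\lambda^0_n)^s\big)$ converges locally uniformly in $s\in\C$. Since both sides of $(\ref{2.9})$ are entire and agree on $\{\mbox{\rm Re}\,s<-1\}$, they agree on all of $\C$, which proves the lemma.

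The one genuinely non-formal step is $(\star)$, the stability of complex powers under smoothing perturbations; once Seeley's parameter-dependent calculus (\cite{Se}, \cite[Chapter 2]{Sh}) is granted it is a short estimate, and it is exactly the mechanism behind Edward's observation that $\zeta_a(-s)-2\zeta_R(-s)$ extends to an entire function. Everything else — the term-by-term identification of the two traces in the convergence region, the basis-independence of the trace, and the analytic continuation — is routine.
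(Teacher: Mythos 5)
Your proposal is correct, and its overall skeleton coincides with the paper's: compute $\mbox{\rm Tr}[R(s)]=\zeta_a(-s)-2\zeta_R(-s)$ term by term on the half-plane $\mbox{\rm Re}\,s<-1$ using (\ref{1.0}), (\ref{2.7'}) and (\ref{2.8}), observe that both sides are entire, and conclude by analytic continuation, with (\ref{2.10}) following from the standard basis-independence of the trace of a trace-class operator \cite{Si}. Where you genuinely diverge is in the key analytic ingredient. The paper justifies entirety (and, implicitly, the smoothing property) by citing Seeley's theorem \cite[Theorem 4 (iii)]{Se} that the diagonal value $K_s(x,x)$ of the kernel of $A^s$ is meromorphic with poles and residues determined by the full symbol of $A$, so that the coincidence of the full symbols of $\Lambda_a+P_0$ and $|D_a|+P_0$ makes the difference of kernels entire. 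You instead prove directly that $A^{z}-B^{z}$ is smoothing for all $z$ when $A-B$ is smoothing, via the resolvent identity inside Seeley's contour integral plus the splitting $A^z-B^z=A^m(A^{z-m}-B^{z-m})+(A^m-B^m)B^{z-m}$; this is more self-contained, and it establishes the ``smoothing operator with finite trace'' claim of the lemma more explicitly than the paper does. You also replace the citation of Edward \cite{E} for the entirety of $\zeta_a(-s)-2\zeta_R(-s)$ by a direct estimate from the asymptotics (\ref{1.1}), which is a nice self-contained alternative. Two small points of care, neither fatal: the single contour formula for $A^z-B^z$ is literally valid only for $\mbox{\rm Re}\,z<0$ (where the separate representations of $A^z$ and $B^z$ converge), so for $0\leq\mbox{\rm Re}\,z<1$ you should either invoke analytic continuation of both sides or simply rely on your inductive splitting, which already covers all $z$ from the case $\mbox{\rm Re}\,z<0$; and the assertion that $s\mapsto\mbox{\rm Tr}[R(s)]$ is entire deserves one sentence noting that the contour representation converges in $\mathcal{L}(H^{-k},H^{k})$, which embeds into the trace class on the compact curve ${\mathbb S}$, so $R(s)$ is holomorphic in trace norm.
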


\begin{proof}
For $\mbox{Re}\,s<-1$, equality (\ref{2.9}) follows from (\ref{1.0}) and (\ref{2.7'}):
\begin{equation}
\mbox{\rm Tr}[R(s)]=\mbox{\rm Tr}[\Lambda_a^s]-\mbox{\rm Tr}[|D_a|^s]=\sum\limits_{n=1}^\infty\big(\lambda_n(a)\big)^{s}-2\sum\limits_{n=1}^\infty n^s=\zeta_a(-s)-2\zeta_R(-s)
\quad(\mbox{Re}\,s<-1).
                         \label{2.10'}
\end{equation}

Let us recall some general facts. Let $A$ be an elliptic first order pseudodifferential operator on a manifold $M$. Moreover, assume $A$ to be self-adjoint and positive with respect to the $L^2$-product defined with the help of a smooth volume form on $M$. The positiveness means $(Au,u)_{L^2}\geq\delta\|u\|^2_{L^2}$ for every $u\in C^\infty_0(M)$ with some $\delta>0$ independent of $u$. Then the power $A^s$ is a well defined pseudodifferential operator for every $s\in{\mathbb C}$. Let $K_s(x,y)\ (x,y\in M)$ be the Schwartz kernel of $A^s$. By statement (iii) of \cite[Theorem 4]{Se}, $K_s(x,x)$ is a meromorphic function of the argument $s\in{\mathbb C}$ whose poles and residues are uniquely determined by the full symbols of $A$.

Returning to our case, the operators $\Lambda_a+P_0$ and $|D_a|+P_0$ have the same full symbol. Therefore the Schwartz kernel $K_s(\theta,\theta)$ of
$R(s)=(\Lambda_a+P_0)^s-(|D_a|+P_0)^{s}$ is an entire function and $\mbox{\rm Tr}[R(s)]=\int_{\mathbb S}K_s(\theta,\theta)\,d\theta$ is also an entire function. Together with (\ref{2.10'}), this proves the first statement of the lemma.
The second statement is just the standard property of compact finite trace self-adjoint operators on a Hilbert space \cite[Chapter 3]{Si}.
\end{proof}

\begin{lemma} \label{L2.3}
Let a function $\varphi\in C^\infty({\mathbb S})$ satisfy $\|\varphi\|_{L^2}=1$. Then, for every $s\geq1$,
\begin{equation}
(\Lambda_a^s\varphi,\varphi)_{L^2}\geq(\Lambda_a\varphi,\varphi)^s_{L^2}.
                         \label{2.21}
\end{equation}
\end{lemma}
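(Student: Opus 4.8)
The plan is to diagonalize $\Lambda_a$ and reduce the claim to Jensen's inequality for the convex function $t\mapsto t^s$. Recall from the discussion preceding the lemma that $\Lambda_a+P_0$ is a positive self-adjoint elliptic first order pseudodifferential operator on $\mathbb S$; hence it has compact resolvent and $L^2(\mathbb S)$ admits an orthonormal basis $\{\psi_n\}_{n=0}^\infty$ of eigenfunctions $\psi_n\in C^\infty(\mathbb S)$ of $\Lambda_a$, say $\Lambda_a\psi_n=\mu_n\psi_n$ with $0=\mu_0<\mu_1\le\mu_2\le\dots$ (so $\psi_0=\varphi_0$ and $\mu_n=\lambda_n(a)$ for $n\ge1$). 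By the construction of complex powers used in this section, $(\Lambda_a+P_0)^s\psi_n=\mu_n^s\psi_n$ for $n\ge1$ while $(\Lambda_a+P_0)^s\psi_0=\psi_0$, so that $\Lambda_a^s\psi_n=\mu_n^s\psi_n$ for all $n\ge0$, where we set $\mu_0^s=0$ (consistent with $s\ge1>0$).

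Next I would expand $\varphi=\sum_{n=0}^\infty c_n\psi_n$. Since $\Lambda_a^s$ is a pseudodifferential operator of order $s$, $\Lambda_a^s\varphi\in C^\infty(\mathbb S)\subset L^2(\mathbb S)$, so Parseval gives convergent series and term-by-term computation yields
$$
(\Lambda_a^s\varphi,\varphi)_{L^2}=\sum_{n=0}^\infty\mu_n^s|c_n|^2,\qquad
(\Lambda_a\varphi,\varphi)_{L^2}=\sum_{n=0}^\infty\mu_n|c_n|^2 ,
$$
while the normalization $\|\varphi\|_{L^2}=1$ gives $\sum_{n=0}^\infty|c_n|^2=1$. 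Thus the numbers $|c_n|^2$ are the masses of a probability distribution supported on the points $\mu_n\in[0,\infty)$, and $(\Lambda_a\varphi,\varphi)_{L^2}$, $(\Lambda_a^s\varphi,\varphi)_{L^2}$ are, respectively, its first moment and the expectation of its $s$-th power.

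Since $t\mapsto t^s$ is convex on $[0,\infty)$ for $s\ge1$, Jensen's inequality then gives
$$
\sum_{n=0}^\infty\mu_n^s|c_n|^2\ \ge\ \Big(\sum_{n=0}^\infty\mu_n|c_n|^2\Big)^{s},
$$
which is exactly (\ref{2.21}). To handle the countable convex combination rigorously, I would apply the finite form of Jensen to the truncated renormalized weights $|c_n|^2\big/\sum_{k\le N}|c_k|^2$ for $n\le N$ and let $N\to\infty$, using the convergence of both series. Alternatively one may phrase everything spectrally: if $E$ denotes the spectral resolution of $\Lambda_a$, then $d\mu(\lambda)=d(E(\lambda)\varphi,\varphi)_{L^2}$ is a probability measure on $[0,\infty)$ with $(\Lambda_a^s\varphi,\varphi)_{L^2}=\int_0^\infty\lambda^s\,d\mu(\lambda)$, and (\ref{2.21}) is just $\int\lambda^s\,d\mu\ge(\int\lambda\,d\mu)^s$.

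I do not expect any real obstacle here. The only points needing (routine) care are the treatment of the null eigenvalue $\mu_0=0$ — harmless because $s\ge1$ forces $0^s=0$, so $\varphi_0$ contributes nothing to either side, and in particular the inequality is an equality when $\varphi$ is a unit multiple of $\varphi_0$ — and the justification of Jensen for an infinite convex combination, which is immediate from the convergence of the eigenfunction expansions guaranteed by the smoothness of $\varphi$ and the ellipticity of $\Lambda_a$.
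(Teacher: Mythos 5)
Your proposal is correct and is essentially the paper's own proof: expand $\varphi$ in an orthonormal eigenbasis of $\Lambda_a$ and apply Jensen's inequality for the convex function $t\mapsto t^s$ (which the paper cites as the Young inequality), with your extra care about the zero eigenvalue and the infinite convex combination being routine refinements of the same argument.
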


\begin{proof}
There exists an orthonormal basis $\{\psi_n\}_{n=0}^\infty$ in $L^2({\mathbb S})$ consisting of eigenfunctions of $\Lambda_a$, i.e.,
$\Lambda_a\psi_n=\lambda_n(a)\psi_n$. Expand the function $\varphi$ in the basis
$$
\varphi=\sum\limits_{n=0}^\infty\alpha_n\psi_n,\quad \sum\limits_{n=0}^\infty|\alpha_n|^2=1.
$$
Then
$$
(\Lambda_a\varphi,\varphi)_{L^2}=\sum\limits_{n=0}^\infty|\alpha_n|^2\lambda_n(a).
$$
Since $x\mapsto x^s$ is the convex function on $[0,\infty)$ for $s\geq1$, by the Young inequality,
\begin{equation}
(\Lambda_a\varphi,\varphi)^s_{L^2}\leq\sum\limits_{n=0}^\infty|\alpha_n|^2\big(\lambda_n(a)\big)^s=(\Lambda^s_a\varphi,\varphi)_{L^2}.\label{a1}
\end{equation}
\end{proof}

\begin{proof}[Proof of Theorem \ref{Th1.2}.]
By Lemmas \ref{L2.2} and \ref{L2.1},
$$
\zeta_a(-s)-2\zeta_R(-s)=\mbox{\rm Tr}[R_s]=\sum\limits_{n=-\infty}^\infty\big((\Lambda_a^s\varphi_n,\varphi_n)_{L^2}-
(|D_a|^s\varphi_n,\varphi_n)_{L^2}\big).
$$
Each term of the latter series is non-negative. Indeed, by Lemma \ref{L2.3}, (\ref{2.6}), and (\ref{2.8}), for $s\geq1$,
\begin{equation}
(\Lambda_a^s\varphi_n,\varphi_n)_{L^2}-(|D_a|^s\varphi_n,\varphi_n)_{L^2}\geq
(\Lambda_a\varphi_n,\varphi_n)^s_{L^2}-(|D_a|^s\varphi_n,\varphi_n)_{L^2}\geq|n|^s-|n|^s=0.
                                                                     \label{a2}
\end{equation}
This proves \eqref{th1}.
\end{proof}

In the rest of the section, we study the question: When can we get the strong inequality in (\ref{th1})? To this end we recall the following definition from \cite{JS, MS}.

Two functions $a,b\in C^\infty({\mathbb S})$ are said to be {\it conformally equivalent} if there exists a conformal or anticonformal transformation $\Psi$ of the disc ${\mathbb D}$ such that
$$
b=a\circ\psi\left|\frac{d\psi}{d\theta}\right|^{-1},\quad\mbox{\rm where}\quad\psi=\Psi|_{\mathbb S}.
$$
($\Psi$ is anticonformal if $\bar\Psi$ is conformal.)

If two positive functions $a,b\in C^\infty({\mathbb S})$ are conformally equivalent, then $\mbox{Sp}(a)=\mbox{Sp}(b)$ and $\zeta_a\equiv\zeta_b$. In particular, if $0<a\in C^\infty({\mathbb S})$ is conformally equivalent to $\mathbf 1$ (the function identically equal to 1), then $\zeta_a\equiv2\zeta_R$.

\begin{lemma}            \label{L2.5}
Let a positive function $a\in C^\infty({\mathbb S})$ be normalized by condition (\ref{2.1}). For the function $\varphi_1$ defined by (\ref{2.4}), the equality
\begin{equation}
(\Lambda_a\varphi_1,\varphi_1)_{L^2}=1
                                                        \label{2.26}
\end{equation}
holds if and only if $a$ is conformally equivalent to $\mathbf 1$.
\end{lemma}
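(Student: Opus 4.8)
The plan is to move everything onto the auxiliary function $u=a^{1/2}\varphi_1$. By (\ref{2.4}) one has $u(\theta)=\frac{1}{\sqrt{2\pi}}\exp\!\big(i\int_0^\theta a^{-1}(s)\,ds\big)$, so that $|u|\equiv(2\pi)^{-1/2}$ on $\S$ and $\|u\|_{L^2}=1$. As in the last display of the proof of Lemma \ref{L2.1}, $(\Lambda_a\varphi_1,\varphi_1)_{L^2}=(\Lambda u,u)_{L^2}$ and $(D_a\varphi_1,\varphi_1)_{L^2}=(Du,u)_{L^2}$, the latter being equal to $1$ by (\ref{2.5}) with $n=1$. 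Thus (\ref{2.26}) asserts precisely that equality holds in (\ref{2.0}) for this $u$. Expanding $u=\sum_{n\in\Z}c_ne^{in\theta}$ gives $(\Lambda u,u)_{L^2}-(Du,u)_{L^2}=4\pi\sum_{n<0}|n|\,|c_n|^2$; since $(Du,u)_{L^2}=1>0$, equality in (\ref{2.0}) is equivalent to $c_n=0$ for all $n<0$, i.e. to $u$ being the trace on $\S$ of a function $h$ holomorphic in $\D$ and $C^\infty$ up to $\S$.

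Suppose (\ref{2.26}) holds, with $h$ as above. Then $h$ is holomorphic in $\D$, continuous on $\overline\D$, and $|h|\equiv(2\pi)^{-1/2}$ on $\S$; in particular $h\not\equiv0$, so it has only finitely many zeros in $\D$, none accumulating at $\S$ since $|h|$ is a positive constant there. Dividing these zeros out by Blaschke factors and applying the maximum principle to the zero-free quotient, one obtains in the standard way $h=(2\pi)^{-1/2}e^{i\gamma}B$ for some $\gamma\in\R$ and some finite Blaschke product $B$. The winding number of $\theta\mapsto h(e^{i\theta})/|h(e^{i\theta})|=\exp\!\big(i\int_0^\theta a^{-1}\big)$ around $\S$ equals $1$ by the normalization (\ref{2.1}), and it also equals $\deg B$; hence $B$ is a single Blaschke factor and
$$
\exp\!\Big(i\int_0^\theta a^{-1}(s)\,ds\Big)=\psi(e^{i\theta}),\qquad \psi=\Psi|_\S,
$$
for a conformal automorphism $\Psi$ of $\D$. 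Differentiating in $\theta$ and taking absolute values yields $a^{-1}(\theta)=|d\psi/d\theta|$, that is $a=\mathbf 1\circ\psi\,|d\psi/d\theta|^{-1}$, which is exactly the statement that $a$ is conformally equivalent to $\mathbf 1$.

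Conversely, assume $a$ is conformally equivalent to $\mathbf 1$. Since the relation is symmetric, $a=|d\psi/d\theta|^{-1}$ for some $\psi=\Psi|_\S$ with $\Psi$ a conformal or anticonformal automorphism of $\D$ (so $L(a)=\int_{-\pi}^{\pi}|d\psi/d\theta|\,d\theta=2\pi$, consistently with (\ref{2.1})). Let $N$ be the holomorphic one of the two Möbius automorphisms $\Psi$, $\overline{\Psi}$; then $|d\psi/d\theta|=|dN/d\theta|$ on $\S$, and writing $N(e^{i\theta})=e^{i\mu(\theta)}$ with $\mu$ smooth and strictly increasing we get $|d\psi/d\theta|=\mu'(\theta)$, hence $\int_0^\theta a^{-1}=\mu(\theta)-\mu(0)$. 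Therefore
$$
u=a^{1/2}\varphi_1=\frac{e^{-i\mu(0)}}{\sqrt{2\pi}}\,N(e^{i\theta})
$$
is the trace on $\S$ of the function $z\mapsto\frac{e^{-i\mu(0)}}{\sqrt{2\pi}}N(z)$, which is holomorphic in $\D$; so $u$ has vanishing negative Fourier coefficients, and by the first paragraph $(\Lambda_a\varphi_1,\varphi_1)_{L^2}=(\Lambda u,u)_{L^2}=(Du,u)_{L^2}=1$, i.e. (\ref{2.26}) holds.

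The step I expect to require the most care is the structural one in the forward direction: showing that a function holomorphic in $\D$, smooth up to the boundary, and of constant boundary modulus is a constant multiple of a finite Blaschke product, and matching its degree with the winding number $1$ furnished by (\ref{2.1}). The anticonformal alternative costs only the minor bookkeeping of passing from $\Psi$ to $\overline{\Psi}$.
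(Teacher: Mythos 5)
Your proof is correct, and its skeleton coincides with the paper's: the reduction of (\ref{2.26}) to the vanishing of the negative Fourier coefficients of $u=a^{1/2}\varphi_1$ (the paper works with $\psi=\sqrt{2\pi}\,a^{1/2}\varphi_1$ and the equality $(\Lambda\psi,\psi)_{L^2}=(D\psi,\psi)_{L^2}$) is exactly the paper's argument, and your converse, via holomorphy of the extension of $u$ and hence $(\Lambda u,u)_{L^2}=(Du,u)_{L^2}=1$, is a Fourier-series rephrasing of the paper's direct computation with $\Lambda\Psi=D\Psi=a^{-1}\Psi$ on $\mathbb{S}$. Where you genuinely diverge is the decisive step of the forward direction: the paper defines $\Psi(z)=\sum_{k\geq0}\hat\psi_k z^k$, observes that its boundary trace $\psi=e^{ib}$ is an orientation-preserving diffeomorphism of $\mathbb{S}$ (because $a^{-1}>0$), and concludes by the argument principle that $\Psi$ is a conformal automorphism of $\mathbb{D}$; you instead discard the diffeomorphism property and use only that the holomorphic extension $h$ has constant modulus on $\mathbb{S}$, invoking the finite-Blaschke-product factorization and then matching $\deg B$ with the winding number $1$ supplied by the normalization (\ref{2.1}). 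Both routes are sound and of comparable length: the paper's is slightly more economical, needing no factorization theorem, while yours isolates precisely which boundary data force conformality --- constant modulus together with total winding one --- and would therefore work even without appealing to the monotonicity of the phase $\int_0^\theta a^{-1}(s)\,ds$.
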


\begin{proof}
We prove the ``if'' statement. Let $\Psi$ be a conformal or anticonformal transformation of the disk ${\mathbb D}$ and let a real function $b(\theta)$ be such that
\begin{equation}
\Psi(e^{i\theta})=e^{ib(\theta)}.
                                                        \label{2.27}
\end{equation}
Assume $a$ to be conformally equivalent to $\mathbf 1$ via $\Psi$, i.e.,
\begin{equation}
a(\theta)=\Big|\frac{db(\theta)}{d\theta}\Big|^{-1}.
                                                        \label{2.28}
\end{equation}
The derivative $db/d\theta$ is positive in the case of a conformal $\Psi$ and $db/d\theta$ is negative in the case of an anticonformal $\Psi$.

In the case of a conformal $\Psi$, the function $\Psi(e^{i\theta})$ is the restriction to ${\mathbb S}$ of the holomorphic on ${\mathbb D}$ function $\Psi(z)$. Therefore
\begin{equation}
\Lambda\Psi(e^{i\theta})=D\Psi(e^{i\theta})=a^{-1}(\theta)\Psi(e^{i\theta}).
                                                        \label{2.29}
\end{equation}
From definition (\ref{2.4}), we derive with the help of (\ref{2.27})--(\ref{2.28})
$$
a^{1/2}\varphi_1(\theta)=\frac{e^{i\theta_0}}{\sqrt{2\pi}}\Psi(e^{i\theta})
$$
with some $\theta_0\in{\mathbb R}$. This gives together with (\ref{2.29})
$$
\begin{aligned}
(\Lambda_a\varphi_1,\varphi_1)_{L^2}&=\big(\Lambda(a^{1/2}\varphi_1),a^{1/2}\varphi_1\big)_{L^2}=\frac{1}{2\pi}\big(\Lambda\Psi(e^{i\theta}),\Psi(e^{i\theta})\big)_{L^2}\\
&=\frac{1}{2\pi}\big(a^{-1}(\theta)\Psi(e^{i\theta}),\Psi(e^{i\theta})\big)_{L^2}=\frac{1}{2\pi}\int\limits_0^{2\pi}a^{-1}(\theta)\,d\theta=1.
\end{aligned}
$$
We proceed similarly in the case of an anticonformal $\Psi$.

We now prove the ``only if'' statement. Assume (\ref{2.26}) to be valid. Set
\begin{equation}
b(\theta)=\int_0^\theta a^{-1}(s)\,ds\quad\mbox{and}\quad \psi=e^{ib}=\sqrt{2\pi}\,a^{1/2}\varphi_1.
                                                        \label{2.30}
\end{equation}
Observe that $\psi:{\mathbb S}\rightarrow{\mathbb S}$ is an orientation preserving diffeomorphism. Since
$$
(\Lambda\psi,\psi)_{L^2}=2\pi(a^{1/2}\Lambda a^{1/2}\varphi_1,\varphi_1)_{L^2}=2\pi(\Lambda_a\varphi_1,\varphi_1)_{L^2}=2\pi
$$
and
$$
(D\psi,\psi)_{L^2}=2\pi(a^{1/2}D a^{1/2}\varphi_1,\varphi_1)_{L^2}=2\pi(D_a\varphi_1,\varphi_1)_{L^2}=2\pi,
$$
we have
\begin{equation}
(\Lambda\psi,\psi)_{L^2}=(D\psi,\psi)_{L^2}.
                                                        \label{2.31}
\end{equation}
Both sides of (\ref{2.31}) are expressed in terms of Fourier coefficients ${\hat\psi}_k$ of the function $\psi$ as follows:
$$
(\Lambda\psi,\psi)_{L^2}=\sum\limits_{k=-\infty}^\infty |k|\,|{\hat\psi}_k|^2,\quad
(D\psi,\psi)_{L^2}=\sum\limits_{k=-\infty}^\infty k\,|{\hat\psi}_k|^2.
$$
Equality (\ref{2.31}) holds only when ${\hat\psi}_k=0$ for all $k<0$.

We define the holomorphic function on $\D$ by
$$
\Psi(z)=\sum\limits_{k=0}^\infty {\hat\psi}_k\,z^k.
$$
The map $\Psi|_{\mathbb S}=\psi:{\mathbb S}\rightarrow{\mathbb S}$ is an orientation preserving diffeomorphism. By the argument principle, $\Psi$ is a conformal transformation of the disk ${\mathbb D}$. Our definitions of $b$ and $\Psi$ imply the validity of equalities (\ref{2.27})--(\ref{2.28}) that mean the conformal equivalence of the functions $a$ and $\mathbf 1$.
\end{proof}

\begin{lemma}     \label{L2.6}
For a positive function $a\in C^\infty({\mathbb S})$ satisfying (\ref{2.1}) and for functions $\varphi_n$ defined by (\ref{2.4}), the following three statements are equivalent:

{\rm (i)} $(\Lambda_a\varphi_1,\varphi_1)_{L^2}=1 $;

{\rm (ii)} $(\Lambda_a\varphi_n,\varphi_n)_{L^2}=n$ and $(\Lambda_a\varphi_{n+1},\varphi_{n+1})_{L^2}=n+1$ for some $2\leq n\in{\mathbb N}$;

{\rm (iii)} $(\Lambda_a\varphi_n,\varphi_n)_{L^2}=|n|$ for every $n\in \Z$.
\end{lemma}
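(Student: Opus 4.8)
The plan is to prove the cyclic chain $(i)\Rightarrow(iii)\Rightarrow(ii)\Rightarrow(i)$; here $(iii)\Rightarrow(ii)$ is trivial, $(i)\Rightarrow(iii)$ is short, and the real content is $(ii)\Rightarrow(i)$. As in the proof of Lemma~\ref{L2.5} I will work with $b(\theta)=\int_0^\theta a^{-1}(s)\,ds$ and the orientation-preserving diffeomorphism $\psi=e^{ib}:{\mathbb S}\to{\mathbb S}$ (condition (\ref{2.1}) guarantees $2\pi$-periodicity and winding number $1$). By (\ref{2.4}) one has $a^{1/2}\varphi_n=(2\pi)^{-1/2}\psi^n$, hence $(\Lambda_a\varphi_n,\varphi_n)_{L^2}=\frac{1}{2\pi}(\Lambda\psi^n,\psi^n)_{L^2}$, while (\ref{2.5}) gives $\frac1{2\pi}(D\psi^n,\psi^n)_{L^2}=(D_a\varphi_n,\varphi_n)_{L^2}=n$. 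Expanding $\psi^n$ in a Fourier series exactly as around (\ref{2.31}), for a given $n>0$ the equality $(\Lambda_a\varphi_n,\varphi_n)_{L^2}=|n|$ is therefore equivalent to $\sum_k(|k|-k)\,|\widehat{\psi^n}_k|^2=0$, i.e. to $\widehat{\psi^n}_k=0$ for all $k<0$, i.e. to $\psi^n$ being the boundary value of a function holomorphic on ${\mathbb D}$ (and smooth up to ${\mathbb S}$); for $n<0$ one obtains the same statement with ``holomorphic'' replaced by ``antiholomorphic'', which upon conjugation is the condition attached to $|n|$, and $n=0$ is trivial since $\varphi_0\in\ker\Lambda_a$.

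Granting this reduction, $(i)\Rightarrow(iii)$ is immediate: if $\psi$ extends to $\Psi$ holomorphic on ${\mathbb D}$, then $\Psi$ is bounded (by the maximum principle, since $|\Psi|=1$ on ${\mathbb S}$), so for every $n\ge1$ the power $\Psi^n$ is holomorphic and bounded with boundary value $\psi^n$, whence $(\Lambda_a\varphi_n,\varphi_n)_{L^2}=n$; the cases $n\le0$ follow by conjugation. And $(iii)\Rightarrow(ii)$ is obvious.

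For $(ii)\Rightarrow(i)$, assume $(\Lambda_a\varphi_n,\varphi_n)_{L^2}=n$ and $(\Lambda_a\varphi_{n+1},\varphi_{n+1})_{L^2}=n+1$ for some $n\ge2$. By the first paragraph, $\psi^n$ and $\psi^{n+1}$ are boundary values of functions $F_n,F_{n+1}$ holomorphic on ${\mathbb D}$ and smooth up to ${\mathbb S}$, with $|F_n|=|F_{n+1}|=1$ on ${\mathbb S}$; hence each of $F_n,F_{n+1}$ is a finite Blaschke product (remove the finitely many zeros by a Blaschke factor and apply the maximum principle to the resulting zero-free unimodular function and to its reciprocal). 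The degree of $F_n$ equals the winding number of $F_n|_{\mathbb S}=\psi^n$, which is $n$ because $\psi$ has winding number $1$; similarly $\deg F_{n+1}=n+1$. In particular $F_n$ does not vanish on ${\mathbb S}$, so $m:=F_{n+1}/F_n$ is a rational function with $m|_{\mathbb S}=\psi^{n+1}/\psi^n=\psi$. Then the rational functions $m^n$ and $F_n$ agree on ${\mathbb S}$, hence are equal; comparing degrees of rational maps, $n\,\deg m=\deg F_n=n$, so $\deg m=1$ and $m$ is a Möbius transformation. Since $|m|=1$ on ${\mathbb S}$ and $m|_{\mathbb S}=\psi$ is an orientation-preserving diffeomorphism, $m$ maps ${\mathbb D}$ onto ${\mathbb D}$, so $m$ is holomorphic on ${\mathbb D}$ and $\psi=m|_{\mathbb S}$ is a holomorphic boundary value; by the first paragraph (with $n=1$) this gives $(\Lambda_a\varphi_1,\varphi_1)_{L^2}=1$, i.e. $(i)$. (Equivalently, $m$ exhibits the conformal equivalence of $a$ with $\mathbf 1$, and $(i)$ then follows from Lemma~\ref{L2.5}.)

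The main obstacle is the step $(ii)\Rightarrow(i)$, specifically the identification of $F_n$ as a finite Blaschke product and the degree identity $m^n=F_n$. The first is a standard fact about inner functions that are smooth up to the boundary, but it must be invoked carefully. The second is the decisive point: it uses only that $F_n$ is a rational function of degree $n$ together with $\deg(m^n)=n\deg m$, and it is what forces $\psi$ --- a priori merely a diffeomorphism --- to be the restriction of a Möbius automorphism. The remaining care is bookkeeping of winding numbers, needed both to pin down $\deg F_n=n$, $\deg F_{n+1}=n+1$ and to conclude that $m$ preserves ${\mathbb D}$ rather than swapping it with its exterior.
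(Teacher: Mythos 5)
Your proof is correct. The skeleton matches the paper's: reduce each equality $(\Lambda_a\varphi_n,\varphi_n)_{L^2}=n$ (via the Fourier-coefficient identity behind (\ref{2.31})) to the statement that $\psi^n$ is the boundary value of a holomorphic function on ${\mathbb D}$, then, assuming (ii), form the quotient $m=F_{n+1}/F_n$ of the two extensions and show that $m^n$ coincides with the extension of $\psi^n$. Where you diverge is the finishing step of (ii)$\Rightarrow$(i): the paper treats $\Psi=\Psi^{(n+1)}/\Psi^{(n)}$ merely as a meromorphic function, proves $\Psi^n=\Psi^{(n)}$ by multiplying the difference by a polynomial clearing the poles and using that a holomorphic function with zero boundary trace vanishes, concludes that $\Psi$ is holomorphic, and then invokes the argument principle (as in Lemma \ref{L2.5}) to see that $\Psi$ is a conformal automorphism; you instead exploit the extra structure that $F_n,F_{n+1}$ are finite Blaschke products (unimodular on ${\mathbb S}$ and smooth up to the boundary), so that $m$ is rational, $m^n=F_n$ follows from agreement of rational functions on ${\mathbb S}$, and the degree count $n\deg m=\deg F_n=n$ forces $m$ to be a M\"obius automorphism of ${\mathbb D}$ directly, the winding-number bookkeeping replacing the argument principle. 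Both routes are sound; yours buys a more concrete conclusion ($\psi$ is literally the restriction of a M\"obius map, so $a$ is conformally equivalent to $\mathbf 1$ by an explicit automorphism) at the cost of the Blaschke-product identification, while the paper's pole-clearing argument needs only meromorphy and the vanishing of a holomorphic function with zero boundary trace. Your direct treatment of (i)$\Rightarrow$(iii) by taking powers of the holomorphic extension is also a slight (harmless) shortcut compared with the paper's citation of Lemma \ref{L2.5}.
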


\begin{proof}
The fact that (i) implies (iii) follows from Lemma \ref{L2.5}. Obviously (iii) implies (ii).
It remains to prove that (ii) implies (i).

Define the diffeomorphism $\psi:{\mathbb S}\rightarrow{\mathbb S}$ by (\ref{2.30}). Assuming statement (ii) to be valid and repeating our arguments from the proof of Lemma \ref{L2.5}, we see that, for some $2\leq n\in{\mathbb N}$, the maps $\psi^n$ and $\psi^{n+1}$ extend to some holomorphic maps $\Psi^{(n)}:{\mathbb D}\rightarrow{\mathbb D}$ and
$\Psi^{(n+1)}:{\mathbb D}\rightarrow{\mathbb D}$ respectively.

Then $\Psi=\Psi^{(n+1)}/\Psi^{(n)}$ is a meromorphic function with the boundary trace
$\Psi|_{\mathbb S}=\psi$. Let us demonstrate that actually $\Psi$ is a holomorphic function. Indeed, $\Psi^n$ is a meromorphic function with the boundary trace
$\Psi^n|_{\mathbb S}=\psi^n=\Psi^{(n)}|_{\mathbb S}$. Therefore the difference $\Psi^n-\Psi^{(n)}$ is a meromorphic function with the zero boundary trace. The difference has finitely many poles that belong to the interior of the disk ${\mathbb D}$. Therefore there exists a polynomial $P(z)$ not identically equal to zero such that the product $P(\Psi^n-\Psi^{(n)})$ is a holomorphic function on the disc. Since the product has the zero boundary trace, it must be identically equal to zero, i.e., $\Psi^n=\Psi^{(n)}$. Thus, $\Psi^n$ is a holomorphic function and $\Psi$ is also a holomorphic function with the boundary trace $\Psi|_{\mathbb S}=\psi$. Again, by the argument principle, $\Psi$ is a conformal transformation of the disc and we finish the proof as in the proof of Lemma \ref{L2.5}.
\end{proof}

\begin{proof}[Proof of Theorem \ref{Th1.2'}.]
Let $a\in C^\infty({\mathbb S})$ be a positive function satisfying (\ref{2.1}).
Let $1\le t\le s$. On using the convexity of the function $x\mapsto x^{s/t}$ on $[0,\infty)$, we obtain similarly to \eqref{2.21}
$$
(\Lambda_a^t\varphi_n,\varphi_n)^{s/t}_{L^2}\leq (\Lambda^s_a\varphi_n,\varphi_n)_{L^2}
$$
for every $n\in{\mathbb Z}$.
On the other hand, (\ref{2.8}) and \eqref{a2} with $t$ in place of $s$ imply
$$
(\Lambda_a^t\varphi_n,\varphi_n)^{s/t}_{L^2}=(\Lambda_a^t\varphi_n,\varphi_n)^{{s/t}-1}_{L^2}(\Lambda_a^t\varphi_n,\varphi_n)_{L^2}\ge |n|^{s-t}(\Lambda_a^t\varphi_n,\varphi_n)_{L^2}.
$$
From two last inequalities
\begin{equation}
(\Lambda^s_a\varphi_n,\varphi_n)_{L^2}\geq |n|^{s-t}(\Lambda_a^t\varphi_n,\varphi_n)_{L^2}\quad (1\le t\le s,\ n\in{\mathbb N}).
                         \label{2.22}
\end{equation}
We recall that \eqref{2.6} and  (\ref{2.21}) already give
\begin{equation}
(\Lambda^t_a\varphi_n,\varphi_n)_{L^2}\geq |n|^t\quad (t\geq 1,\ n\in{\mathbb N}).
                         \label{2.23}
\end{equation}

On using (\ref{2.22})--(\ref{2.23}) and the equality $\Lambda_a\varphi_0=0$, we derive for $1\le t\le s$
\begin{equation}
\begin{aligned}
\zeta_a(-s)-2\zeta_R(-s)&=\sum\limits_{n=-\infty}^\infty\big((\Lambda_a^s\varphi_n,\varphi_n)_{L^2}-|n|^s\big)
\geq\sum\limits_{n=-\infty}^\infty|n|^{s-t}\big((\Lambda_a^t\varphi_n,\varphi_n)_{L^2}-|n|^t\big)\\
&\geq\sum\limits_{n=-\infty}^\infty\big((\Lambda_a^t\varphi_n,\varphi_n)_{L^2}-|n|^t\big)=\zeta_a(-t)-2\zeta_R(-t).
\end{aligned}
                         \label{2.24}
\end{equation}

{\bf Remark.}
Actually our arguments give the stronger version of \eqref{2.24}
$$
\zeta_a(-s)-2\zeta_R(-s)\ge \T\big[|D_a|^{s-t}(\Lambda_a^t-|D_a|^t)\big]\ge\zeta_a(-t)-2\zeta_R(-t)\quad (1\leq t\leq s).
$$

If $a$ is conformally equivalent to $\mathbf 1$, then $\zeta_a\equiv 2\zeta_R$.

Now, assume $a$ is not conformally equivalent to $\mathbf 1$. By Lemma \ref{L2.6}, for every $N>0$, there exists $2\leq n_0\in{\mathbb N}$ such that $n_0\geq N$ and
 $(\Lambda_a\varphi_{n_0},\varphi_{n_0})_{L^2}>n_0$.
Then $(\Lambda_a^t\varphi_{n_0},\varphi_{n_0})_{L^2}>|n_0|^t$ for every $t\geq1$ as follows from (\ref{2.22}). All terms of the first sum in (\ref{2.24}) are non-negative and at least one term is positive.

Moreover, we can prove the positiveness of the derivative. To simplify our formulas, let us introduce the notation
$\psi(t)=\zeta_a(-t)-2\zeta_R(-t)$ and set $s=t+\Delta t$ with $\Delta t>0$. We rewrite (\ref{2.24}) in more detail as follows:
$$
\begin{aligned}
\psi(t+\Delta t)&\geq
n_0^{\Delta t}\big((\Lambda_a^t\varphi_{n_0},\varphi_{n_0})_{L^2}-n_0^t\big)+\sum\limits_{n\neq n_0}|n|^{\Delta t}\big((\Lambda_a^t\varphi_n,\varphi_n)_{L^2}-|n|^t\big)\\
&\geq
n_0^{\Delta t}\big((\Lambda_a^t\varphi_{n_0},\varphi_{n_0})_{L^2}-n_0^t\big)+\sum\limits_{n\neq n_0}\big((\Lambda_a^t\varphi_n,\varphi_n)_{L^2}-|n|^t\big)\\
&=(n_0^{\Delta t}-1)\big((\Lambda_a^t\varphi_{n_0},\varphi_{n_0})_{L^2}-n_0^t\big)+\sum\limits_{n=-\infty}^\infty\big((\Lambda_a^t\varphi_n,\varphi_n)_{L^2}-|n|^t\big)\\
&=(n_0^{\Delta t}-1)\big((\Lambda_a^t\varphi_{n_0},\varphi_{n_0})_{L^2}-n_0^t\big)+\psi(t).
\end{aligned}
$$
The result can be written as
$$
\psi(t+\Delta t)-\psi(t)\geq C_t(n_0^{\Delta t}-1)
$$
with
\begin{equation}
C_t=(\Lambda_a^t\varphi_{n_0},\varphi_{n_0})_{L^2}-n_0^t.
                         \label{2.40}
\end{equation}
This implies
\begin{equation}
\frac{d\psi(t)}{dt}\geq C_t\ln n_0 >0.
                         \label{2.41}
\end{equation}

The coefficient $C_t$ in (\ref{2.41}) grows exponentially with $t$. Indeed, on using (\ref{2.6}) and (\ref{2.21}), we obtain from (\ref{2.40}) for $t\geq1$
$$
\begin{aligned}
C_t&=(\Lambda_a^t\varphi_{n_0},\varphi_{n_0})_{L^2}-n_0^t\geq (\Lambda_a\varphi_{n_0},\varphi_{n_0})^t_{L^2}-n_0^t
=(\Lambda_a\varphi_{n_0},\varphi_{n_0})^{t-1}_{L^2}(\Lambda_a\varphi_{n_0},\varphi_{n_0})_{L^2}-n_0^t\\
&\geq n_0^{t-1}(\Lambda_a\varphi_{n_0},\varphi_{n_0})_{L^2}-n_0^t=n_0^{t-1}\Big((\Lambda_a\varphi_{n_0},\varphi_{n_0})_{L^2}-n_0\Big),
\end{aligned}
$$
i.e.,
\begin{equation}
C_t\geq n_0^{t-1}\Big((\Lambda_a\varphi_{n_0},\varphi_{n_0})_{L^2}-n_0\Big)=(\ln n_0)^{-1}Ce^{\alpha t},
                         \label{2.42}
\end{equation}
where $\alpha=\ln n_0>0$ and $C=n_0^{-1}(\ln n_0)\Big((\Lambda_a\varphi_{n_0},\varphi_{n_0})_{L^2}-n_0\Big)>0$. From (\ref{2.41}) and (\ref{2.42}),
$$
\frac{d\psi(t)}{dt}\geq C e^{\alpha t}\quad\mbox{for}\quad t\geq0.
$$
This proves (\ref{1.2}) since $n_0$ can be chosen such that $\alpha=\ln n_0\geq\alpha_0$ for every $\alpha_0>0$. Estimate (\ref{1.3}) is obtained from (\ref{1.2}) by integration with the inequality $\psi(1)\ge 0$ taken into account.
\end{proof}

\section{some open questions}

Besides statements of Theorems \ref{Th1.2} and \ref{Th1.2'}, some other properties of the function $\zeta_a(s)$ can be studied. In particular, the behavior of the function on the interval $s\in(-1,0)$ is of a great interest. Observe that $\zeta_a(0)=2\zeta_R(0)=-1$ by Lemma \ref{L2.2}. The best way to discover interesting properties of the zeta functions is just to give a look at graphs of $\zeta_a(s)$ for several examples of the function $a$. Unfortunately, besides the trivial case of $a\equiv1$, no example is known with an explicit expression for $\zeta_a(s)$. Nevertheless, as mentioned in \cite[Section 3.2]{GP}, there exists an algorithm for computing Steklov eigenvalues of a planar domain. We hope the algorithm will allow us to compute the zeta function with a good precision. This work is not started yet.

The values
$$
Z_k(a)=\zeta_a(-2k)\quad (k=1,2,\dots)
$$
are of a specific interest. These values are called {\it zeta-invariants} of the function $a$. Let us recall that $\zeta_R(-2k)=0\ (k=1,2,\dots)$. Therefore Theorem \ref{Th1.2} implies the inequality
\begin{equation}
Z_k(a)\geq0\quad (k=1,2,\dots)
                         \label{3.1}
\end{equation}
for a positive function $a\in C^\infty({\mathbb S})$ (condition (\ref{2.1}) is not assumed now).

Zeta-invariants can be explicitly expressed through Fourier coefficients of the function $a$. Given $a\in C^\infty({\mathbb S})$, we denote its Fourier coefficients by ${\hat a}_n$, i.e.,
$$
a(\theta)=\sum\limits_{n=-\infty}^\infty {\hat a}_n\,e^{in\theta}.
$$
Then
\begin{equation}
Z_k(a)=\sum\limits_{j_1+\dots +j_{2k}=0} N_{j_1\dots j_{2k}}\,{\hat a}_{j_1}{\hat a}_{j_2}\dots {\hat a}_{j_{2k}},
                                               \label{3.2}
\end{equation}
where, for $j_1+\dots+ j_{2k}=0$,
\begin{equation}
\begin{aligned}
N_{j_1\dots j_{2k}}=\sum\limits_{n=-\infty}^\infty
\Big[&\left|n(n+j_1)(n+j_1+j_2)\dots(n+j_1+\dots+ j_{2k-1})\right|\\
&-n(n+j_1)(n+j_1+j_2)\dots(n+j_1+\dots +j_{2k-1})\Big].
\end{aligned}
                                               \label{3.3}
\end{equation}
There is only a finite number of nonzero summands on the right-hand side of (\ref{3.3}) since the expression
$$
f(n)=n(n+j_1)(n+j_1+j_2)\dots(n+j_1+\dots+ j_{2k-1})
$$
is a polynomial of degree $2k$ in $n$ which takes positive values for sufficiently large $|n|$.
Series (\ref{3.2}) converges absolutely since Fourier coefficients ${\hat a}_n$ decay rapidly while coefficients $N_{j_1\dots j_{2k}}$ are of a polynomial growth in $|j|=|j_1|+\dots+|j_{2k}|$.
In particular, for a real function $a\in C^\infty({\mathbb S})$,
\begin{equation}
Z_1(a)=\frac{2}{3}\sum\limits_{n=2}^\infty(n^3-n)\, |{\hat a}_n|^2.
                                               \label{3.4}
\end{equation}
This formula belongs to Edward \cite{E}. He also proved (\ref{3.2}) in the case of $k=2$ (without using the notation $Z_2(a)$) \cite{E2}. In the general case, zeta-invariants were introduced in
\cite[Section 2]{MS}.

We emphasize that formulas (\ref{3.2})--(\ref{3.3}) make sense for an arbitrary (complex-valued) function $a\in C^\infty({\mathbb S})$. These formulas can be taken as the definition of zeta-invariants for an arbitrary $a\in C^\infty({\mathbb S})$ although the zeta function $\zeta_a$ is not defined in the general case. Moreover, one can easily see that zeta invariants are real for a real function $a$.

\begin{conjecture} \label{C3.1}
Inequalities (\ref{3.1}) hold for every real function $a\in C^\infty({\mathbb S})$.
\end{conjecture}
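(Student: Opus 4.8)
We outline a proof of Conjecture~\ref{C3.1}. The idea is that the positivity of $a$ enters the proof of Theorem~\ref{Th1.2} only through $a^{1/2}$ and the eigenbasis $\varphi_n$, whereas the operator $B:=\Lambda^{1/2}a\Lambda^{1/2}$ on $L^2(\S)$ — which has the same spectrum as $\Lambda_a$ when $a>0$ — is self-adjoint for \emph{every} real $a\in C^\infty(\S)$. The plan is to realise $Z_k(a)$ as a limit of traces of finite-rank self-adjoint operators built from $B$, and to bound each of these traces by a H\"older (equivalently, Weyl) inequality.

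Let $E_N$ be the orthogonal projection of $L^2(\S)$ onto $\mathrm{span}\{e^{in\theta}:|n|\le N\}$, put $B_N=E_NBE_N$ (self-adjoint, finite rank), and let $S$ be the self-adjoint operator $Se^{in\theta}=\mathrm{sgn}(n)\,e^{in\theta}$, so that $S$ commutes with $E_N$ and with $\Lambda$, $S^2=I-P_0$, and $B_NP_0=0$ because $\Lambda^{1/2}P_0=0$. In the orthonormal basis $(2\pi)^{-1/2}e^{in\theta}$ the matrix of $B_N$ is $(B_N)_{mn}=|m|^{1/2}|n|^{1/2}\hat a_{m-n}$ for $|m|,|n|\le N$, so, evaluating traces of powers as sums over closed walks $m_0\to m_1\to\dots\to m_{2k}=m_0$ and pairing the half-powers $|m_\ell|^{1/2}$,
$$
\T[B_N^{2k}]=\sum_{|m_0|,\dots,|m_{2k-1}|\le N}\Big(\prod_{\ell=0}^{2k-1}|m_\ell|\Big)\prod_{\ell=1}^{2k}\hat a_{m_{\ell-1}-m_\ell},
$$
while the factors $\mathrm{sgn}(m_\ell)$ supplied by $S$ replace each $|m_\ell|$ by $m_\ell$, so that
$$
\T[(B_NS)^{2k}]=\sum_{|m_0|,\dots,|m_{2k-1}|\le N}\Big(\prod_{\ell=0}^{2k-1}m_\ell\Big)\prod_{\ell=1}^{2k}\hat a_{m_{\ell-1}-m_\ell}
$$
(here $m_{2k}:=m_0$). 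Subtracting and grouping the walks by their increment vector $\vec\jmath=(m_{\ell-1}-m_\ell)_{\ell=1}^{2k}$, which satisfies $\sum_\ell j_\ell=0$, the difference $\T[B_N^{2k}]-\T[(B_NS)^{2k}]$ is precisely the $N$-th partial sum of the absolutely convergent series \eqref{3.2}--\eqref{3.3}, obtained by restricting $n=m_0$ to walks that stay inside $[-N,N]$; since the inner partial sums increase to $N_{-\vec\jmath}=N_{\vec\jmath}$, dominated convergence gives $\T[B_N^{2k}]-\T[(B_NS)^{2k}]\to Z_k(a)$ as $N\to\infty$.

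It then remains to show $\T[B_N^{2k}]-\T[(B_NS)^{2k}]\ge0$ for each fixed $N$, which is soft. On one hand $B_N^{2k}=(B_N^k)^2\ge0$, so $\T[B_N^{2k}]\ge0$; on the other hand $\T[(B_NS)^{2k}]$ is real (by cyclicity, $\overline{\T[(B_NS)^{2k}]}=\T[(SB_N)^{2k}]=\T[(B_NS)^{2k}]$), and by H\"older's inequality for the Schatten norms applied to the $2k$-fold product $(B_NS)\cdots(B_NS)$,
$$
\big|\T[(B_NS)^{2k}]\big|\ \le\ \|B_NS\|_{\mathcal S_{2k}}^{2k}\ =\ \T\big[\big((B_NS)^{*}(B_NS)\big)^{k}\big]\ =\ \T\big[(SB_N^{2}S)^{k}\big].
$$
Using cyclicity together with $S^2=I-P_0$ and $B_NP_0=0$ one gets $\T[(SB_N^{2}S)^{k}]=\T[(S^2B_N^{2})^{k}]=\T[B_N^{2k}]$, whence $|\T[(B_NS)^{2k}]|\le\T[B_N^{2k}]$ and therefore $\T[B_N^{2k}]-\T[(B_NS)^{2k}]\ge0$. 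Letting $N\to\infty$ gives $Z_k(a)\ge0$, i.e.\ \eqref{3.1}.

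I expect the only real work — and so the main obstacle, though it is not deep — to lie in two bookkeeping checks: (i) that the Fourier truncation $E_NBE_N$ reproduces \emph{exactly} the partial sums of \eqref{3.2}--\eqref{3.3}, which requires matching the convention for $f(n)$ in \eqref{3.3} and using the symmetry $N_{-\vec\jmath}=N_{\vec\jmath}$ so that no artefact of the cut-off survives in the limit; and (ii) the identity $\T[(SB_N^{2}S)^{k}]=\T[B_N^{2k}]$, which must be carried out attentively because $\Lambda$, $B$, $S$ all annihilate the constants and so $S^2$ is only $I-P_0$ rather than $I$. Once these are settled the argument is complete; note also that equality in \eqref{3.1} would force $\T[B_N^{2k}]=\T[(B_NS)^{2k}]$ for all $N$, which one may hope to exploit in analysing the equality case.
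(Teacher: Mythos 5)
You should first be aware that the paper offers no proof to compare against: Conjecture \ref{C3.1} is stated there as open for $k\ge2$ (only $k=1$ is known, via Edward's formula \eqref{3.4}), so your argument has to stand entirely on its own. Having gone through it, I do not find a gap. The two pillars both check out. First, the identification $Z_k(a)=\lim_{N\to\infty}\big(\T[B_N^{2k}]-\T[(B_NS)^{2k}]\big)$: since $(B)_{mn}=|m|^{1/2}|n|^{1/2}{\hat a}_{m-n}$, the two traces are exactly your walk sums, and their difference carries the weight $\prod_\ell|m_\ell|-\prod_\ell m_\ell\ge0$, which vanishes unless the closed walk changes sign; at a sign change $|j_\ell|=|m_{\ell-1}|+|m_\ell|$, so every vertex of a contributing walk is bounded by $2\sum_i|j_i|$, hence for each fixed increment vector the truncated inner sum is a monotone subsum of \eqref{3.3} that becomes exact once $N\ge 2\sum_i|j_i|$, and the absolute convergence asserted after \eqref{3.3} justifies the passage to the limit. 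Your bookkeeping point (i) is fine ($N_{-\vec\jmath}=N_{\vec\jmath}$ follows from $n\mapsto-n$ because the product has an even number of factors), and for $k=1$ the limit reproduces \eqref{3.4} on the nose, a reassuring consistency check. Second, the finite-$N$ inequality: $\T[(B_NS)^{2k}]$ is real, $|\T[X^{2k}]|\le\sum_i s_i(X)^{2k}$ by Weyl's inequality (see \cite{Si}), and since $S$ is a self-adjoint unitary on the orthogonal complement of the constants while $B_N$ annihilates the constants, the singular values of $B_NS$ coincide with $|\lambda_i(B_N)|$, so $\sum_i s_i(B_NS)^{2k}=\T[B_N^{2k}]$; this settles your point (ii), the projection $P_0$ causing no harm exactly as you say. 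Note that positivity of $a$ is indeed never used, only that $a$ is real (so that $B_N$ is Hermitian), which is precisely the generality the conjecture requires.

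Because the authors present the statement as open and supported only by numerics for $k\ge2$, a proof this soft (a sharp Fourier cut-off plus the Weyl--H\"older inequality for Schatten norms) calls for independent line-by-line verification, above all of the interchange of limit and summation, which is the only analytic step; but as written I could not locate an error, and if the argument survives such scrutiny it settles Conjecture \ref{C3.1} in full. The closing remark about the equality case is speculative and plays no role in the proof.
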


The conjecture is true in the case of $k=1$ by Edward's formula (\ref{3.4}). For $k\geq2$, the conjecture remains open although it is confirmed by a lot of numerical experiments. Unlike the problem of computing Steklov eigenvalues, formulas (\ref{3.2})--(\ref{3.3}) are very easy for computerization.

Conjecture \ref{C3.1} can be strengthened by some estimate from below. For example, as follows from (\ref{3.4}),
\begin{equation}
Z_1(a)\geq c_1\sum\limits_{n=2}^\infty n^3|{\hat a}_n|^2
                                               \label{3.5}
\end{equation}
for every real $a\in C^\infty({\mathbb S})$ with some universal constant $c_1>0$. Observe that the Fourier coefficients ${\hat a}_0$ and ${\hat a}_1$ do not participate on the right-hand side of the estimate. This relates to the conformal invariance of zeta-invariants \cite[Section 4]{MS}. In particular, $Z_k(a)=0\ (k=1,2,\dots)$ for all functions belonging to the three-dimensional space
$$
\{a\in C^\infty({\mathbb S})\mid a(\theta)={\hat a}_0+{\hat a}_1e^{i\theta}+{\hat a}_{-1}e^{-i\theta}\}.
$$
The natural generalization of (\ref{3.5}) looks as follows:
\begin{equation}
Z_k(a)\geq c_k\sum\limits_{n=2}^\infty n^{2k+1}|{\hat a}_n|^{2k}\quad (k=1,2,\dots)
                                               \label{3.6}
\end{equation}
for every real $a\in C^\infty({\mathbb S})$ with some constant $c_k>0$ depending only on $k$. This estimate is conjectured in \cite[Problem 6.1]{MS}, it is not proved yet. In our opinion, the best possible version of such estimates should look as follows:
\begin{equation}
Z_k(a)\geq c_k\sum\limits_{n=2}^\infty n^{2k+1}|{\hat b}_n|^{2},\quad\mbox{\rm where}\quad b=a^k.
                                               \label{3.7}
\end{equation}

In \cite{E2}, Edward proved the pre-compactness of a Steklov isospectral family of planar domains in the $H^s$-topology for $s<5/2$. The proof is based on the usage of first two zeta-invariants
$Z_1(a),Z_2(a)$, and of $\zeta_a(-1)$, $\zeta_a(-3)$. The same approach would work for proving the corresponding compactness theorem in the $C^\infty$-topology if estimate (\ref{3.7}) was proven. But estimate (\ref{3.6}) is not sufficient for such a proof.

\end{document}